\title{Differentially Private High-Dimensional Approximate Range Counting, Revisited} 
\author{Martin Aumüller}{IT University of Copenhagen, Denmark}{maau@itu.dk}{https://orcid.org/0000-0002-7212-6476}{}
\author{Fabrizio Boninsegna\footnote{Corresponding author, work partially done while visiting IT University of Copenhagen.}}{Department of Information Engineering, University of Padova, Italy}{fabrizio.boninsegna@phd.unipd.it}{https://orcid.org/0009-0001-0996-010X}{}
\author{Francesco Silvestri}{Department of Information Engineering, University of Padova, Italy}{silvestri@dei.unipd.it}{https://orcid.org/0000-0002-9077-9921}{}
\authorrunning{M. Aumüller, F. Boninsegna and F. Silvestri} 
\keywords{Differential Privacy, Locality Sensitive Filters, Approximate Range Counting, Concominant Statistics} 
\def\Pr{\textnormal{Pr}}
\newcommand{\bff}[1]{{\bf{#1}}}
\newcommand{\inner}[2]{\langle #1\,,\, #2 \rangle}
\newcommand{\BO}[1]{O\left(#1\right)}
\begin{document}

\maketitle

\begin{abstract}
Locality Sensitive Filters are known for offering a quasi-linear space data structure with rigorous guarantees for the Approximate Near Neighbor search (ANN) problem. Building on Locality Sensitive Filters, we derive a simple data structure for the Approximate Near Neighbor Counting (ANNC) problem under differential privacy (DP). Moreover, we provide a simple analysis leveraging a connection with concomitant statistics and extreme value theory.
Our approach produces a simple data structure with a tunable parameter that regulates a trade-off between space-time and utility. Through this trade-off, our data structure achieves the same performance as the recent findings of Andoni et al. (NeurIPS 2023) while offering better utility at the cost of higher space and query time.
In addition, we provide a more efficient algorithm under pure $\varepsilon$-DP and elucidate the connection between ANN and differentially private ANNC.
As a side result, the paper provides a more compact description and analysis of Locality Sensitive Filters for Fair Near Neighbor Search, improving a previous result in Aum\"{u}ller et al. (TODS 2022).
\end{abstract}

\section{Introduction}
Since the emergence of deep learning-based text and image embeddings, such as CLIP~\cite{DBLP:conf/icml/RadfordKHRGASAM21}, the management of collections of high-dimensional vectors has become a critical challenge.
Efficiently handling these collections and supporting complex query operations is essential for various applications, including social networks~\cite{bayardo2007scaling} and recommendation systems~\cite{sarwar2001item}.
The required query operations in applications are often similarity search primitives, which have been widely studied in the literature~\cite{surveyNN, WangSSJ14}.
In particular, the \textit{$r$-Near Neighbor Search} ($r$-NNS) or \emph{Count} ($r$-NNC) problems are fundamental primitives:
given a set $\mathcal{S}\subset \mathbb{R}^d$ of $n$ vectors of $d$ dimensions and a radius $r>0$, construct a data structure that, for any query $\bff{q}\in \mathbb{R}^d$, \emph{returns a point} in $\mathcal{S}$ with distance at most $r$ from $\bff{q}$ if such a point exists, or \emph{counts the number of such points}.
Unfortunately, these problems suffer from the \textit{curse of dimensionality}, which refers to the phenomenon that any exact data structure with polynomial size requires a query time exponential in the dimension of the input space. 
This is supported by popular algorithmic hardness conjectures \cite{AlmanW15,WILLIAMS2005357}.
To address this issue, approximate approaches have been proposed: for a given approximation factor $c>1$, we consider the \textit{$(c,r)$-Approximate Near Neighbor Search} $(c,r)$-ANNS problem and the \emph{$(c,r)$-Approximate Near Neighbor Count} $(c,r)$-ANNC problem. 
These relax the original problem constraints in such a way that the data structure may use points at distance at most $cr$ to answer a query. For a search operation, this means that a point at distance at most $cr$ from $\bff{q}$ can be returned; for a count operation, points at distance between $r$ and $cr$ may be counted as near neighbors.
Locality Sensitive Hashing (LSH)~\cite{indyk1998approximate} and Locality Sensitive Filters (LSF)~\cite{andoni2017optimal,christiani2017framework} are the most common approaches for solving approximate near neighbor problems with theoretical guarantees.

In another line of research, there is an increasing demand for developing solutions for data analysis tasks that preserve the privacy of sensitive personal information in the input data. This was, for example, highlighted in an invited talk at PODS in 2019 by Dwork~\cite{DBLP:conf/pods/Dwork19}, who discussed the application of \emph{differential privacy} (DP)~\cite{dwork2006calibrating} in the US census.
Differential privacy is a widely adopted notion that can provide rigorous guarantees of privacy:
intuitively, DP guarantees that the removal or addition of a single input entry cannot significantly affect the final result of an analysis task. The main idea of the DP approach is to inject well-calibrated noise into the data (structure) that protects privacy without significantly affecting the accuracy of the analysis task. 
Counting queries, which can be seen as a more general form of the near neighbor counting problem studied in this paper, are a well-studied area in DP (see \cite{hardt2010multiplicative, vadhan2017complexity}). However, the error in the final count must be polynomial in $n$ and the dimension of the input space to guarantee privacy \cite{hardt2010geometry, muthukrishnan2012optimal}. To reduce the error, the requirements should be relaxed to provide any count within a fuzzy range around the query \cite{huang2021approximate}. Although this is different from the curse of dimensionality above, achieving an efficient solution similarly necessitates the use of approximate counts.
Thus, it is \emph{natural} to study approximate counting problems for the high-dimensional ANN problem when considering them in a DP setting.

In this paper, we focus on ANN problems for the inner product on the  $d$-dimensional unit sphere $\mathbb{S}^{d-1} \coloneqq   \{\bff{x} \in \mathbb{R}^d: \| \bff{x} \|_2 = 1\}$, which is often called \emph{cosine} or \emph{angular similarity}. In these settings, the goal is to \emph{find} or \emph{count} points with large inner products. 
More specifically, we study the $(\alpha,\beta)$-ANN count and search problems with $0\leq \beta < \alpha < 1$. Let $B(\bff{q}, \alpha) \coloneqq \{\bff{x} \in \mathbb{S}^{d-1} \mid \inner{\bff{x}}{\bff{q}} \geq \alpha\}$ be the set of unit vectors that have an inner product of at least $\alpha$ with $\bff{q} \in \mathbb{S}^{d-1}$. The counting variant asks for a query $\bff{q}$ to count all points in a dataset $\mathcal{S}$ with inner product at least $\alpha$ but tolerates points with inner product at least $\beta$. That means that the resulting estimate $E$ should satisfies $|\mathcal{S}\cap B(\bff{q},\alpha)| \leq E \leq |\mathcal{S}\cap B(\bff{q},\beta)|$. 
This is a common notation in inner product search, and intuitively $\alpha$ and $\beta$ are equivalent to $r$ and $r/c$ in $(c,r)$-ANNC.
The first result for differentially private $(\alpha,\beta)$-ANNC in high dimensions has been recently provided by Andoni et al. \cite{andoni2024differentially}, where the authors use a linear space tree-based data structure based on the concept of LSF.%
\footnote{As observed by \cite{andoni2024differentially} as well, a solution on the unit sphere leads to a solution for the whole Euclidean space thanks to embedding methods. We will defer all discussion of this embedding and its applicability to Appendix~\ref{appendix: Embedding into the Euclidean sphere}.}

In this work, we explore the design space of locality sensitive filtering-based solutions to ANN and ANNC problems.
We show that by revisiting the LSF framework for ANN it is possible to derive a simpler and more compact solution for ANN.
Building on this result, we derive a novel solution for ANNC under differential privacy that extends the range of applicability of the state of the art~\cite{andoni2024differentially} by removing some limitations on parameter ranges and differential privacy assumptions.
In particular, we provide strong guarantees in the regime of \emph{pure DP} (in contrast to approximate DP), and show that balancing the noise term of DP with the approximation error of LSF is not the only design choice: in fact, spending more space and query time results in more accurate solutions.
The following section will provide more details on the technical contribution.

\subsection{Our Contribution}
\label{section: Our contributions}
\begin{algorithm}[t]
\footnotesize
    \centering
    \caption{\texttt{DPTop-1 Data Structure for DP-ANNC}}\label{alg: DPTop-1 Data Structure}
    \begin{minipage}[t]{0.52\textwidth}
    \begin{algorithmic}[1]
        \Procedure {\texttt{construction}$ (\mathcal{S}\subseteq\mathbb{S}^{d-1}, \alpha, \beta, \varepsilon, \delta)$}{}
        \State $\rho \gets \frac{(1-\alpha^2)(1-\beta^2)}{(1-\alpha\beta)^2}$; 
         $m\gets \big\lceil n^{\frac{\rho}{1-\alpha^2}} \big\rceil$
        \State $\mathcal{A}^{m} = (\bff{a}_1, \dots, \bff{a}_m)$ with $\bff{a}_i \sim \mathcal{N}(0,1)^d$
        \State $T[1,\ldots,m] \gets (0, \ldots, 0)$
        \For {$\bff{x} \in \mathcal{S}$}
            \State $i\gets \arg\max_{i \in [m]}\inner{\bff{a}_i}{\bff{x}}$
            \State $T[i] \gets T[i] + 1$
        \EndFor
        \State $\eta \gets \alpha\sqrt{2\log m}-\sqrt{2(1-\alpha^2)\log\log m}$
        \State $\tilde{T} \gets \texttt{make\_private}(T, \varepsilon, \delta)$
        \State \textbf{return }$\mathcal{D} = ({T}, \mathcal{A}^m, m,  \eta)$
        \EndProcedure
        \end{algorithmic}
        \end{minipage}
        \hfill
        \footnotesize
        \begin{minipage}[t]{0.46\textwidth}
        \begin{algorithmic}[1]
        \Procedure{$\texttt{query}(\bff{q})$}{}
            \State $B\gets \{i \in [m]: \inner{\bff{a}_i}{\bff{q}}\geq \eta\}$
            \State $\widetilde{\text{ans}}\gets 0$
            \For {$i \in B$}
                \State $\widetilde{\text{ans}}\gets \widetilde{\text{ans}}+{\tilde{T}}[i]$
            \EndFor
            \State \textbf{return }$\widetilde{\text{ans}}$
        \EndProcedure
        \end{algorithmic}
    \end{minipage}
\end{algorithm}
\paragraph*{Revisiting LSF for ANN}
Our work is based on a construction for $(\alpha, \beta)$-ANN  first proposed by Aumüller et al. \cite{aumuller2022sampling} in the context of algorithmic fairness.
We provide a more compact description and analysis of LSF for $(\alpha, \beta)$-ANN, and obtain a data structure with a lower pre-processing time of $O(d\,n^{1 + o(1)})$ (from $O(d\,n^{1 + \rho + o(1)})$ where $0 < \rho :=\rho(\alpha, \beta) \leq 1$ is the \emph{strength of the filter}).
Moreover, assuming that some random variables follow a limiting distribution (Theorem \ref{theorem: asymptotic concominants}), we get a more compact and simpler proof than~\cite{aumuller2022sampling} leveraging an elegant connection with concomitant statistics and extreme value theory, and the parameters used in our solution naturally follow from this theory. In Section~\ref{subsection: analysis of CloseTop-1}, we demonstrate that an alternative construction procedure, \texttt{CloseTop-1}, remains effective without assuming any limiting distributions, matching the convergent properties.

\paragraph*{From ANN to DP-ANNC}
We then present a solution for $(\alpha, \beta)$-ANNC under differential privacy.
More specifically, we provide a general methodology that allows to translate a variant of a list-of-points data structure \cite{andoni2017optimal} for ANN into a data structure for DP-ANNC. Intuitively, a list-of-points is a data structure where input points are organized in a collection of lists and a query consists of a scan of some of these lists: this is the case, for instance, of methods based on LSH or LSF. This approach offers a way to develop the data structure in two steps by describing a data structure satisfying certain characteristics for ANN, and then applying a suitable DP mechanism on top of it.

When the data structure is built on top of the previous result, we get the \texttt{DPTop-1} data structure presented in Algorithm~\ref{alg: DPTop-1 Data Structure}, which we will now describe in words. Given a dataset $\mathcal{S} \subseteq \mathbb{S}^{d-1}$ consisting of $n$ points, two similarity thresholds $0 \leq \beta < \alpha < 1$, and privacy parameters $\varepsilon>0$ and $\delta\in [0,1)$, the data structure samples $m = n^{O(1)}$ Gaussian vectors. We associate with each such vector a counter, initialized with 0; each point in $\mathcal{S}$ increments the counter of the vector that maximizes the inner product. 
Then, the counts are made differentially private by a suitable DP mechanism \texttt{make\_private}, for example the Truncated Laplace mechanism~\cite{geng2020tight} or the ALP mechanism~\cite{aumuller2022representing}. Depending on the mechanism used, different privacy guarantees can be provided.
Since each point increments exactly one counter, the absence or presence of a data point affects only a small part of the data structure. As a result, the sensitivity of the data structure is low and only a small amount of noise has to be added.
The $m$ vectors with their noisy counts form the data structure that can be released publicly. 
For any query $\bff{q}\in \mathbb{S}^{d-1}$, the estimate of the number of near neighbors with inner product at least $\alpha$ is the sum of counters associated to vectors with inner product similarity to $\bff{q}$ greater than $\eta(\alpha) = \alpha\sqrt{2\log m}-\sqrt{2(1-\alpha^2)\log\log m}$. This choice is guided by the theory of concomitant statistics and extreme value theory of Gaussian random variables~\cite{david1974asymptotic}, which we will formally introduce in Section~\ref{subsection: Concominant Order Statistics}. As we will detail in Section~\ref{section: Top-1}, in the asymptotic regime---hence for $n\to\infty$--- \texttt{DPTop-1} offers a simple and elegant solution for the $(\alpha,\beta)$-ANNC problem under differential privacy. 
The following theorem provides the guarantees of \texttt{DPTop-1} when using the Truncated Laplace mechanism~\cite{geng2020tight} as a privacy mechanism (we refer to Theorem~\ref{theorem: Top-1} for the exact statements regarding the ANN data structure, and Theorem \ref{theorem: DP-ANNC} for the exact statements of the DP-ANNC implementation).
\begin{theorem} 
\label{theorem: our contribution DP-ANNC} Consider the asymptotic regime, $\varepsilon > 0$, $\delta \in \big(0, \frac{1}{2}\big)$, $0\leq\beta<\alpha<1$, and $\alpha - \beta=\Omega\big(\sqrt{\frac{\log\log n}{\log n}}\big)$. Let $\mathcal{S}=\{x_i\}_{i=1,\dots, n} \subseteq \mathbb{S}^{d-1}$ and let $\bff{q} \in \mathbb{S}^{d-1}$.
Then \texttt{DPTop-1} (Algorithm \ref{alg: DPTop-1 Data Structure}) with Truncated Laplace mechanism  
satisfies $(\varepsilon, \delta)$-DP, and with probability at least 2/3,  the query 
returns $\widetilde{\textnormal{ans}}$ such that
\begin{equation*}
    (1-o(1))|\mathcal{S}\cap B(\bff{q},\alpha)|-O\bigg(\frac{\log(1/\delta)}{\varepsilon}n^{\rho + o(1)}\bigg)\leq \widetilde{\textnormal{ans}}\leq |\mathcal{S}\cap B(\bff{q}, \beta)|+O\bigg(\frac{\log(1/\delta)}{\varepsilon}n^{\rho + o(1)}\bigg).
\end{equation*}
The data structure has pre-processing time $O(d\cdot n^{1+\frac{\rho}{1-\alpha^2}})$, expected query time and space $O(d\cdot n^{\frac{\rho}{1-\alpha^2}})$.
\end{theorem}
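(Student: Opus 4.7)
The plan is to decompose the proof into privacy, a completeness bound (near neighbors are counted), a soundness bound (far points rarely land in $B$), a noise estimate, and the resource bounds. For \textbf{privacy}, I would observe that each input point $\bff{x}\in\mathcal{S}$ increments exactly one counter $T[i^*(\bff{x})]$, so the map $\mathcal{S}\mapsto T$ has $\ell_\infty$-sensitivity one; parallel composition of the truncated Laplace mechanism $\texttt{TLap}(1,\varepsilon,\delta)$ on each coordinate of $T$ makes the released array $(\varepsilon,\delta)$-differentially private, and both the thresholding step and the query procedure are post-processing of $(T,\mathcal{A}^m)$.

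The \textbf{utility} analysis hinges on concomitant order statistics for Gaussians, invoking the machinery of Section~\ref{subsection: Concominant Order Statistics}. Fix a data point $\bff{x}$, let $\gamma=\inner{\bff{x}}{\bff{q}}$, write $\bff{q}=\gamma\bff{x}+\sqrt{1-\gamma^2}\,\bff{x}^\perp$, set $U_i=\inner{\bff{a}_i}{\bff{x}}$ and $i^*=\arg\max_i U_i$. A Gaussian decomposition yields
\begin{equation*}
\inner{\bff{a}_{i^*}}{\bff{q}}\;=\;\gamma\,U_{(m)}+\sqrt{1-\gamma^2}\,Z,\qquad Z\sim\mathcal{N}(0,1)\ \text{independent of }U_{(m)},
\end{equation*}
while extreme value theory gives $U_{(m)}=\sqrt{2\log m}\,(1+o(1))$ with high probability. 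For \emph{completeness}, take $\gamma\geq\alpha$: the standardized threshold $(\eta-\gamma\sqrt{2\log m})/\sqrt{1-\gamma^2}$ diverges to $-\infty$ at rate $\sqrt{\log\log m}$, so $\Pr[i^*(\bff{x})\in B]=1-o(1)$; applying Markov's inequality to the count of ``missed'' near neighbors gives $|\{\bff{x}\in\mathcal{S}\cap B(\bff{q},\alpha):i^*(\bff{x})\in B\}|\geq (1-o(1))|\mathcal{S}\cap B(\bff{q},\alpha)|$ with constant probability.

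For \emph{soundness}, take $\gamma\leq\beta$: the standardized threshold is now positive of order $(\alpha-\beta)\sqrt{2\log m/(1-\beta^2)}$, so the Gaussian tail gives $\Pr[i^*(\bff{x})\in B]\leq m^{-(\alpha-\beta)^2/(1-\beta^2)+o(1)}$. The choice $m=\lceil n^{(1-\beta^2)/(1-\alpha\beta)^2}\rceil$ is calibrated exactly so that, after the algebraic identity $(1-\alpha\beta)^2-(\alpha-\beta)^2=(1-\alpha^2)(1-\beta^2)$, this exponent collapses to $\rho-1+o(1)$, producing $O(n^{\rho+o(1)})$ expected false contributions from points with $\gamma\leq\beta$, which Markov converts to a w.h.p.\ bound. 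For the \emph{noise}, the same computation gives $|B|\leq m\cdot\Pr[\mathcal{N}(0,1)\geq\eta]\leq n^{\rho+o(1)}$ w.h.p.; since each $\texttt{TLap}(1,\varepsilon,\delta)$ sample has magnitude $O(\log(1/\delta)/\varepsilon)$, the aggregate noise added to $\widetilde{\text{ans}}$ is $O(|B|\log(1/\delta)/\varepsilon)$, and the thresholding may zero out a populated counter at the cost of at most the threshold per $i\in B$, yielding the additive error in the theorem statement. A union bound over these events with generous constant slack delivers probability $\geq 2/3$.

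The \textbf{main obstacle} is the concomitant calculation itself: one must track $U_{(m)}$ and the independent $Z$ quantitatively enough to justify both the $1-o(1)$ retention rate at $\gamma=\alpha$ and the sharp $n^{\rho+o(1)}$ tail at $\gamma\leq\beta$, using precisely the subleading correction $-\sqrt{2(1-\alpha^2)\log\log m}$ in $\eta$ to absorb the $O(\log\log m/\sqrt{\log m})$ fluctuation of $U_{(m)}$ around $\sqrt{2\log m}$ while still leaving a $\sqrt{\log\log m}$ margin for the Gaussian tail; this is also where the assumption ``$\alpha-\beta$ not too small'' is needed so that the Gaussian-tail exponent is nontrivial. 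The \textbf{resource bounds} are then routine bookkeeping: construction does $O(d)$ work per inner product over $nm$ products plus $O(m)$ noise draws, giving $O(dnm)=O(dn^{1+\rho/(1-\alpha^2)})$; a query does $O(dm)$ work for the inner products and at most $|B|\leq m$ additions; the released structure stores $m$ Gaussians and $m$ counters.
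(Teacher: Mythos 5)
Your proposal is correct and follows essentially the same route as the paper: sensitivity-one parallel truncated Laplace for privacy, the concomitant decomposition $\inner{\bff{a}_{i^*}}{\bff{q}}=\gamma X_{(m)}+\sqrt{1-\gamma^2}Z$ together with $X_{(m)}\approx\sqrt{2\log m}$ for completeness and soundness, balancing $m$ via the identity $(1-\alpha\beta)^2-(\alpha-\beta)^2=(1-\alpha^2)(1-\beta^2)$, a Markov argument for each of the three error sources, and routine accounting for time and space. The paper merely packages the same steps modularly (Theorem~\ref{theorem: Top-1} for the ANN properties of \texttt{Top-1}, Lemma~\ref{lemma: Reduction from ANN to ANNC} for the ANN$\to$ANNC conversion, and Theorem~\ref{theorem: DP-ANNC} for the DP layer) and then composes them; one small wording caution is that your ``w.h.p.'' bounds on $|B|$ and on the false-positive count are really only constant-probability consequences of Markov applied to expectation bounds, as in Lemma~\ref{lemma: Reduction from ANN to ANNC} and Theorem~\ref{theorem: DP-ANNC}, so the final union bound to reach $2/3$ must tune those constants accordingly.
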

This simple algorithm matches the accuracy of the solution found by Andoni et al.~\cite{andoni2024differentially} and results in a straightforward space partitioning of $\mathbb{S}^{d-1}$, one of the main goals of~\cite{andoni2024differentially}.
Furthermore, our approach provides a solution that works for almost all similarity thresholds on the unit sphere, while~\cite{andoni2024differentially} supports a single distance threshold and relies on embedding techniques and scaling for all other distance thresholds.

While Algorithm \ref{alg: DPTop-1 Data Structure} is potentially already practical, both the space and running time requirements are worse than the solution presented in~\cite{andoni2024differentially} due to the large number of filters $m$.
We suggest two improvements to the algorithm that do not affect the compactness of the algorithm and the proof.
We first observe that the theoretical result in Theorem \ref{theorem: our contribution DP-ANNC} works only for $n\to \infty$: we drop this limitation in Section~\ref{subsection: analysis of CloseTop-1} thanks to a small but novel variation in the construction procedure.
We then observe in Section~\ref{subsection: tensorization} how to achieve almost linear pre-processing time, linear space, and $d\cdot n^{\rho + o(1)}$ expected query time, which is optimal due to a space-time tradeoffs lower bound \cite{andoni2017optimal}\footnote{It is sufficient to set $\rho_u=0$ in Theorem 3.3 in the reference paper to get the lower bound for the running time}, by concatenating $\text{polylog}(n)$ data structures, using a technique called \emph{tensorization} \cite{christiani2017framework, aumuller2022sampling}. 
We call this final version \texttt{TensorCloseTop-1}.
Table~\ref{tab:my_label} summarizes the guarantees of all algorithms described in this paragraph and compares them with the state of the art approach~\cite{andoni2024differentially}.
\begin{table}
\footnotesize
\centering
\renewcommand{\arraystretch}{3} 
\setlength{\tabcolsep}{3pt} 
\caption{\textbf{Results for DP-ANNC},  $\sigma = \frac{(1-\alpha^2)(1-\beta^2)}{(1-\alpha\beta)^2+(\alpha-\beta)^2}$, $\rho = \frac{(1-\alpha^2)(1-\beta^2)}{(1-\alpha\beta)^2}$, and $\sigma<\rho<2\sigma$ for any $0\leq\beta<\alpha<1$. The bound $(*)$ holds only in expectation. Time and space bounds omit factor $d$.} 
\begin{tabular}{|>{\centering\arraybackslash}m{3.2cm}|
>{\centering\arraybackslash}m{1.2cm}|
>{\centering\arraybackslash}m{2.5cm}|
>{\centering\arraybackslash}m{2.3cm}|
>{\centering\arraybackslash}m{1.8cm}|
>{\centering\arraybackslash}m{1.7cm}|}
     \hline
     
     \vspace{-6pt}\textbf{Mechanism} & 
     \vspace{-6pt}\textbf{Privacy} & 
     \vspace{-6pt}\textbf{Additive Error} & 
     \vspace{6pt}\shortstack{\textbf{Preprocessing} \\ \textbf{Time}}&
     \vspace{6pt}\shortstack{\textbf{Expected} \\ 
     \textbf{Query} \\
     \textbf{Time}} & 
     \vspace{-6pt}\textbf{Space} \\
     
     \hline
     
     \vspace{6pt}\shortstack{\texttt{DPTop-1} \\ 
     w/ Truncated Laplace} & 
     \vspace{-6pt}$(\varepsilon, \delta)$ & 
     \vspace{-6pt}$O\big(\frac{\log(1/\delta)}{\varepsilon}n^{\rho + o(1)}\big)$ & 
     \vspace{-6pt}$O\big(n^{1+\frac{\rho}{1-\alpha^2}}\big)$ & 
     \vspace{-6pt}$O\big(n^{\frac{\rho}{1-\alpha^2}}\big)$ & 
     \vspace{-6pt}$O\big(n^{\frac{\rho}{1-\alpha^2}}\big)$ \\
     
     \hline
     
     \vspace{6pt}\shortstack{{Andoni et al.~\cite{andoni2024differentially}}, and \\
    \texttt{TensorCloseTop-1} \\ 
    w/ Truncated Laplace} & 
     \vspace{-6pt}$(\varepsilon, \delta)$ & 
     \vspace{-6pt}$O\big(\frac{\log(1/\delta)}{\varepsilon}n^{\rho + o(1)}\big)$ & 
     \vspace{-6pt}$n^{1+o(1)}$ & 
     \vspace{-6pt}$n^{\rho + o(1)}$ & $O(n)$ \\
     
     \hline
     \hline
     
     \vspace{-6pt}\shortstack{\footnotesize {Andoni et al.~\cite{andoni2024differentially}}}  & 
     \vspace{-6pt}$(\varepsilon, 0)$ & 
     \vspace{-6pt}$O\big(\frac{1}{\varepsilon}n^{\rho + o(1)}\big)$ & 
     \vspace{-6pt}$n^{1+o(1)}$$^{(\ast)}$ & 
     \vspace{-6pt}$n^{\rho + o(1)}$ & $O(n)^{(\ast)}$ \\
     
     \hline
     
     \vspace{6pt}\shortstack{\texttt{TensorCloseTop-1} \\ w/ \texttt{Max Projection}} &  
     \vspace{-6pt}$(\varepsilon, 0)$ & 
     \vspace{-6pt}$O\big(\frac{1}{\varepsilon}n^{\rho + o(1)}\big)$ & 
     \vspace{-6pt}$n^{1+o(1)}$ & 
     \vspace{-6pt}$n^{\rho + o(1)}$ & $O(n)$ \\
     
     \hline
     
     \vspace{6pt}\shortstack{Unbalanced \\ \texttt{TensorCloseTop-1} \\ with \texttt{Laplace}} &  
     \vspace{-6pt}$(\varepsilon, 0)$ &  
     \vspace{-6pt}$O\big(\frac{1}{\varepsilon}n^{\sigma + o(1)}\big)$ & 
     \vspace{-6pt}$O\big(n^{\frac{\sigma}{1-\alpha^2}}\big)$ & 
     \vspace{-6pt}$n^{2\sigma + o(1)}$ & $O\big(n^{\frac{\sigma}{1-\alpha^2}}\big)$ \\
     \hline
\end{tabular}
\label{tab:my_label}
\end{table}

\paragraph*{Balanced and Unbalanced Data Structures}
As can be seen from the theorem statement, the provided estimate comes with an additive error of $O\left(\log(1/\delta)n^{\rho+ o(1)}/\varepsilon\right)$.
This term includes two fundamentally different error sources.
First, we might include ``far away points'', i.e., points with inner product below $\beta$, in the count through the ANN data structure. This error does not depend on the choice of the privacy mechanism. 
The second source of error, which is due to privacy, arises from summing over $n^{\rho+o(1)}$ noisy counters, as a query is expected to search through $n^{\rho+o(1)}$ buckets on average. 
\texttt{DPTop-1} and the data structure of Andoni et al. \cite{andoni2024differentially} \emph{balance} these two errors so that both are upper bounded by $n^{\rho+o(1)}$; however, as we will show in Sections~\ref{section: Top-1} and~\ref{section: From ANN to DP-ANNC} this is not necessarily an optimal trade-off. 
By using an \emph{unbalanced} data structure---to be discussed in detail in Section~\ref{subsection: Balanced and Unbalanced Top-1}--- with Laplace noise, we can achieve a more accurate result at the cost of larger running time and more space. 
In fact, for any parameter, our unbalanced data structure solves DP-ANNC with an additive error $n^{\sigma+o(1)}$ for $\sigma<\rho$.
The main insight is that the sum of $n^{\rho+o(1)}$ noisy, unbiased, and uncorrelated counters, provided by the Laplace mechanism, scales with $n^{\rho/2+o(1)}$ by concentration arguments. This makes it suboptimal to provide the same upper bound for both sources of error.

\paragraph*{Comparison to Andoni et al. (NeurIPS 2023)}
For the convenience of the reader,
we now provide a quick description of the solution in \cite{andoni2024differentially}, and highlight differences to the present work.
Let $\eta_u$, $\eta_q$, $T$ and $K$ be suitable parameters that depend on $\alpha$ and $\beta$.
The data structure consists of a tree with degree $T$ and height $K$; each internal node $v$ is associated to a $d$-dimensional random Gaussian vector  $\bff{g}_v$ and to a subset of the input set $\mathcal{S}$.
At the beginning, the root contains the entire input set $\mathcal{S}$.
Then, for each internal node with a top-down approach, we  partition the assigned vectors into $T$ groups: each input vector $\bff{x}$ is assigned to the node $v$ with  smallest index such that the inner product is $\inner{\bff{x}}{\bff{g}_v}\geq \eta_u$. (If no such index exists, the point is not stored in the data structure.)
Once the input points have been processed, we replace the list in each leaf with its noisy size by adding Truncated Laplace noise~\cite{geng2020tight} (if the final value is below a given threshold, we replace the value with 0).
Given a query $\bff{q}$, we collect the counts of all leaves $v$ for which $\inner{\bff{q}}{\bff{g}_{v'}}\geq \eta_q$ for all nodes $v'$ on the unique path from the root to $v$, and return as a final estimate the sum of these counts. Using this tree data structure circumvents the evaluation problem of a too large number of filters mentioned for our variant above. 

As the goal of their paper is to address Euclidean distance, the range of the $\alpha$ and $\beta$ parameters is limited; their analysis works only for an $\alpha$ value that corresponds to Euclidean distance $\Theta(\log^{-1/8} n)$ and all other distances are only supported through embedding and scaling, which adds an additional distortion to the distance values. 
In contrast, our solution allows for a wider range of these parameters, increasing the applicability for inner product similarity on the sphere and still gets a data structure that holds for Euclidean distance.
Furthermore, by removing the tree structure, we are able to design an algorithm with little data dependencies that is likely to exploit hardware accelerators (e.g., Nvidia Tensor Core, Intel AMX) for neural networks that are optimized for batches of inner products.

\subsection{Previous work}
\paragraph*{Near Neighbor Search.}

\textit{Locality Sensitive Hashing} (LSH)~\cite{indyk1998approximate} is one of the most used approaches for solving ANN with rigorous guarantees. However, it suffers of large space requirements. 
Indeed, LSH  requires $\BO{nd + n^{1+\rho}}$ memory words, where $\rho$ is a parameter describing the ``power'' of the used LSH (e.g., $\rho=O(1/c^2)$ for Euclidean distance~\cite{DBLP:journals/cacm/AndoniI08}): indeed the data structure requires to create $n^\rho$ hash tables, each storing all $n$ points. Both Panigrahy~\cite{DBLP:conf/soda/Panigrahy06} and Kapralov~\cite{DBLP:conf/pods/Kapralov15} provided linear space solutions using variants of LSH.
An interesting technique to achieve smooth space-time trade-offs is given by \textit{Locality Sensitive Filters} (LSF)~\cite{christiani2017framework,andoni2017optimal}.
In the context of this work, the interesting space-time trade-off to focus on is the linear space regime~\cite{aumuller2022sampling}. Besides offering optimal space, this regime has many additional interesting properties for downstream applications.
For example, very recently, Andoni et al.~\cite{andoni2024differentially} showed their application in the context of differentially private range counting in high dimensional data. As mentioned above, 
a linear space data structure only involves at most one time a point of the dataset, so the absence or presence of a data point only affects a small part of the data structure; in comparison, with traditional LSH-based approaches a single point is stored in many different tables in the data structure. 

\paragraph*{Differentially Private Counting Queries.}
Counting queries require, except for a few classes of queries, a polynomial error in $n$ and in the space dimension to guarantee privacy  \cite{hardt2010geometry, muthukrishnan2012optimal}.
This incentivized Huang and Yao \cite{huang2021approximate} to relax the condition, allowing for the release of any count within a fuzzy range of the query. For ball queries in $\mathbb{R}^d$, this is essentially the problem to release any count between $|\mathcal{S}\cap B_{D}(\bff{q}, r)|$ and $|\mathcal{S}\cap B_{D}(\bff{q}, cr)|$, that we will identify as the $(c, r)$-Approximate Nearest Neighbor Count (ANNC) problem. One of the main results in \cite{huang2021approximate} is that there exists a differential private solution of the problem with poly-logarithmic error in $n$ at the price of an exponential dependence in the dimension $d$. A solution for the high dimensional case was proposed in \cite{andoni2024differentially}, where Andoni et al. proposed a linear space data structure for the $(c,r)$-ANN, to solve the differential private $(c,r)$-ANNC. The authors developed a Locality Sensitive Filtering data structure with $\rho=\frac{4c^2}{(c^2+1)^2}$, for the differential private $(c, r)$-ANNC in the Euclidean space  with additive error $O\big(n^{\rho +o(1)}\big)$ and multiplicative error $1-o(1)$, getting rid of the dependence on the dimension. The proposed data structure is based on a more general theory for data structures with space-time trade-offs \cite{andoni2017optimal}, making the analysis more involved. In this paper, we will show that our data structure offers the same guarantees with a more streamlined analysis.

\section{Preliminaries}
\label{section: Preliminaries}
\subsection{Notation} 
\label{subsection: notation}
We let $[m]$ be the set of integers $\{1,\dots, m\}$.
We denote with $\bff{q}\in \mathbb{S}^{d-1}$ a query point, and with $\bff{x}_{\varrho}$ a point of the dataset such that $\inner{\bff{x}_{\varrho}}{\bff{q}} = \varrho$. We set $\bff{a}_{\bff{x}}$ as the vector associated to $\bff{x}$, and define $X_{\bff{x}}:=\inner{\bff{a}_{\bff{x}}}{\bff{x}}$, and $Q_{\bff{x}} = \inner{\bff{a}_{\bff{x}}}{\bff{q}}$ as the concomitant---to be defined in Section~\ref{subsection: Concominant Order Statistics}--- of $X_{\bff{x}}$. If $X_{\bff{x}} = \max_{i\in [m]}\inner{\bff{a}_i}{\bff{x}}$ then it is denoted as $X_{\bff{x}, (m)}$ and so the concomitant as $Q_{\bff{x}, [m]}$. The threshold for the query filter is $\eta = \alpha\sqrt{2\log m}-\sqrt{2(1-\alpha^2)\log\log m}$.  A ball in the hyper-sphere under inner product similarity centered in $\bff{q}$ is denoted as $B(\bff{q}, \alpha):=\{\bff{x}\in \mathbb{S}^{d-1}\,:\,\inner{\bff{q}}{\bff{x}}\geq \alpha\}$. We call a point $\bff{x}$ in $\mathcal{S}$ \emph{close} to $\bff{q}$ if  $\inner{\bff{q}}{\bff{x}} \geq \alpha$, and \emph{far} if $\inner{\bff{q}}{\bff{x}} <\beta$. We consider $n$ to be the number of points in the dataset $\mathcal{S}\subset \mathbb{S}^{d-1}$. We denote the Gaussian distribution of mean $\mu$ and variance $\sigma^2$ as $\mathcal{N}(\mu, \sigma^2)$. 

\subsection{Problem Definition}
\label{subsection: problem definition}
\begin{definition}[$(\alpha, \beta)$-ANN] Consider a set $\mathcal{S}\subseteq \mathbb{S}^{d-1}$ of $n$ points. The Approximate Nearest Neighbor Search ANN problem asks to construct a data structure for $\mathcal{S}$ that for a given query $\bff{q}\in \mathbb{S}^{d-1}$, such that $B(\bff{q},\alpha)$ contains a point in $\mathcal{S}$, returns a point in $\mathcal{S} \cap B(\bff{q},\beta)$.
\end{definition}
We will study a data structure that solves this problem with asymptotically high probability, hence at least $1-o(1)$. The inner product similarity is related to the Euclidean distance, as $||\bff{x}-\bff{y}||_2=\sqrt{2(1-\inner{\bff{x}}{\bff{y}})}$ for any $\bff{x}, \bff{y}\in \mathbb{S}^{d-1}$. Therefore, for $\alpha = 1-\frac{r^2}{2}$ and $\beta = 1-\frac{(cr)^2}{2}$, a $(c,r)$-ANN in $(\mathbb{S}^{d-1}, \|\cdot \|_2)$ is equivalent to the $(\alpha,\beta)$-ANN defined above.

\begin{definition}[$(\alpha,\beta)$-ANNC] Consider a set $\mathcal{S}\subseteq \mathbb{S}^{d-1}$ of $n$ points. The Approximate Near Neighbor Counting (ANNC) problem asks to construct a data structure for $\mathcal{S}$ that, for a given query $\bff{q}\in \mathbb{S}^{d-1}$, returns a number between $|\mathcal{S} \cap B(\bff{q}, \alpha)|$ and $|\mathcal{S} \cap B(\bff{q}, \beta)|$.
\end{definition}
This problem is the counting equivalent of the well-studied~\emph{spherical range reporting problem} (see for example~\cite{DBLP:conf/soda/AhleAP17}) that asks to \emph{enumerate} all points at a certain distance from $\bff{q}$. 

\subsection{Concomitant Order Statistics}
\label{subsection: Concominant Order Statistics}
The theory of concomitant order statistics offers a very elegant and intuitive tool for random projections in $\mathbb{S}^{d-1}$, as highlighted in \cite{eshghi2008locality, pham2021simple, pham2022falconn++}.
Let $(X_1, Y_1), \dots, (X_m, Y_m)$ be $m$ random samples from a bivariate distribution. We order the values according to $X$ such that $X_{(1)} \leq \dots \leq X_{(i)} \leq \dots \leq X_{(m)}$.
The $Y$-variate associated with $X_{(r)}$ is denoted as $Y_{[r]}$ and it is called the \emph{concomitant} of the $r$-th order statistic.

\paragraph*{Relation With Random Projections.} 
Let $\bff{x}, \bff{y}\in \mathbb{S}^{d-1}$ such that $\inner{\bff{x}}{\bff{y}}=\varrho$ and $\bff{a} \sim \mathcal{N}(0,1)^{d}$. 
Consider the random variables $X=\inner{\bff{x}}{\bff{a}}$ and $Y = \inner{\bff{y}}{\bff{a}}$, then $(X,Y)\sim  \mathcal{N}(0,0,1,1,\varrho)$, which is a standard bivariate normal distribution with correlation coefficient $\varrho$ \footnote{The general notation for a bivariate Gaussian distribution is $\mathcal{N}(\mu_X, \mu_Y, \text{Var}[X], \text{Var}[Y], \text{Cov}[X,Y])$, while $\text{Cov}[X,Y] = \sum_{i=1}^{d}x_iy_i = \varrho$.}. 
The relation between concomitant and order statistics for the normal bivariate distribution is given by the following lemma.

\begin{lemma}[\cite{david1974asymptotic}] 
\label{lemma: relation between concominant and extreme}
Given $m$ samples $\{(X_i, Y_i)\}_{i=1, \dots, m}$ from the standard bivariate normal distribution $\mathcal{N}(0,0,1,1,\varrho)$, for any $r\in\{1, \dots, m\}$ we have that 
$ Y_{[r]} = \varrho X_{(r)} + Z_r$, where $Z_r$ is a random variable distributed as $\mathcal{N}(0, 1 - \varrho^2)$ and independent of $X_{(r)}$.
\end{lemma}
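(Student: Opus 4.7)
The plan is to leverage the standard conditional decomposition of a bivariate normal vector. Specifically, if $(X,Y) \sim \mathcal{N}(0,0,1,1,\varrho)$, then one can write $Y = \varrho X + Z$ where $Z \sim \mathcal{N}(0,1-\varrho^2)$ is independent of $X$. This follows from a direct computation: the right-hand side is a linear combination of independent Gaussians, hence jointly Gaussian with $X$, with matching means, variances, and covariance $\mathrm{Cov}[X, \varrho X + Z] = \varrho$, so it coincides in distribution with the original $(X,Y)$. Consequently, the bivariate structure of the lemma can be replaced by an equivalent representation in which each $X_i$ and $Z_i$ are fully independent.

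Once this decomposition is in place, I would apply it coordinate-wise: for each sample $i \in [m]$, write $Y_i = \varrho X_i + Z_i$ with $\{Z_i\}_{i\in[m]}$ i.i.d.\ $\mathcal{N}(0,1-\varrho^2)$ and independent of $\{X_j\}_{j\in[m]}$. Let $\pi(r)$ denote the random index such that $X_{\pi(r)} = X_{(r)}$; by definition, the concomitant satisfies $Y_{[r]} = Y_{\pi(r)} = \varrho X_{(r)} + Z_{\pi(r)}$. The only thing left to verify is that $Z_{\pi(r)}$ has the claimed marginal distribution $\mathcal{N}(0,1-\varrho^2)$.

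The central observation for this step is that $\pi(r)$ is a deterministic function of $(X_1,\dots,X_m)$ alone, and the vector $(Z_1,\dots,Z_m)$ is independent of $(X_1,\dots,X_m)$. Conditioning on the entire $X$-sample fixes the value of $\pi(r)$ and leaves $(Z_1,\dots,Z_m)$ still i.i.d.\ $\mathcal{N}(0,1-\varrho^2)$, so $Z_{\pi(r)} \mid (X_1,\dots,X_m) \sim \mathcal{N}(0,1-\varrho^2)$. Integrating out then gives the unconditional distribution, and moreover shows $Z_{\pi(r)}$ is independent of $(X_1,\dots,X_m)$, and in particular of $X_{(r)}$, so that setting $Z := Z_{\pi(r)}$ yields exactly the statement of the lemma.

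There is no real obstacle here: the only subtlety, and the place where a careless argument would go wrong, is confusing $Y_{[r]}$ with the order statistic $Y_{(r)}$ of the $Y$-sample, for which the clean additive decomposition would fail because the index realizing the $r$-th order statistic of $Y$ depends on the $Z_i$'s and therefore breaks independence. Emphasizing that $\pi(r)$ depends only on the $X$-sample is the key subtlety worth making explicit.
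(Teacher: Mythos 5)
Your proof is correct, and since the paper simply cites \cite{david1974asymptotic} for this lemma without reproducing an argument, there is nothing to compare against except the standard one in the concomitant-statistics literature --- which is exactly what you give: the regression decomposition $Y_i = \varrho X_i + Z_i$ with $Z_i \perp X_i$, the observation that the index $\pi(r)$ achieving $X_{(r)}$ is a measurable function of the $X$-sample alone, and hence $Z_{\pi(r)}$ remains $\mathcal{N}(0,1-\varrho^2)$ and independent of $X_{(r)}$. You also correctly flag the one place a careless argument breaks, namely confusing $Y_{[r]}$ with the order statistic $Y_{(r)}$, for which independence of the selected noise term would fail.
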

A standard result of concomitant order statistics states that $Y_{[r]}-\mathbb{E}[Y_{[r]}]$ weakly converges to a Gaussian distribution $\mathcal{N}(0,1-\varrho^2)$ \cite{david1974asymptotic}. Thus, defining $F_{Y_{(m)}}$ as the probability density function of $Y_{[m]}$, we have that $\lim_{m\to \infty}F_{Y_{[m]}} = \mathcal{N}(\varrho\mathbb{E}[X_{(m)}], 1-\varrho^2)$ \cite{fleming2013counting}. By adding the fact that $\mathbb{E}[X_{(m)}] =\sqrt{2\log m} - o(1)$ \cite{hall1979rate} we get the following theorem.

\begin{theorem}[\cite{david1974asymptotic, hall1979rate}]
\label{theorem: asymptotic concominants}
    Let $\{(X_i, Y_i)\}_{i=1,\dots, m}$ be $m$ i.i.d. samples from $\mathcal{N}(0,0,1,1,\varrho)$. Then $Y_{[m]}$ weakly converges to $\mathcal{N}(\varrho \sqrt{2\log m}, 1-\varrho^2)$.
\end{theorem}
This asymptotic result serves as the basis of the intuition for our data structure: if we associate to each point in $\bff{x}\in\mathcal{S}$ the closest Gaussian vector $\bff{a}_{\bff{x}} = \arg\max_{\bff{a} \in \{\bff{a}_1, \dots, \bff{a}_m\}}\inner{\bff{a}}{\bff{x}}$, then a query $\bff{q}\in \mathbb{S}^{d-1}$, such that $\inner{\bff{q}}{\bff{x}}=\varrho$, will find $\bff{x}$ associated to a Gaussian vector with inner product similarity $\inner{\bff{q}}{\bff{a}_{\bff{x}}}\sim \varrho \sqrt{2\log m}$.

\subsection{Differential Privacy}
\label{subsection: differential privacy}
Differential Privacy (DP) is a definition on indistinguishability for the outputs of protocols applied to \emph{neighboring} datasets. Two datasets are neighbor $\mathcal{S}\sim \mathcal{S}'$ by \emph{addition/removal} if they differ by the addition or a removal of one point, instead they are neighbor by \emph{substitution} if $|\mathcal{S}|=|\mathcal{S}'|$ and they differ in one point. 

\begin{definition}[Approximate Differential Privacy \cite{dwork2014algorithmic}]
For $\varepsilon>0$ and $\delta \in [0,1)$, we say that a randomized algorithm $\mathcal{M}$ is $(\varepsilon, \delta)$-differentially private if for any two neighboring datasets $\mathcal{S}\sim \mathcal{S}'$, and any possible outcome of the algorithm $Y \subseteq \textnormal{range}(\mathcal{M})$, we have $\Pr[\mathcal{M}(\mathcal{S})\in Y]\leq e^{\varepsilon}\Pr[\mathcal{M}(\mathcal{S}')\in Y] + \delta$.
\end{definition}
We are mainly interested in \emph{histogram queries}  $f:\mathcal{X}^n\to \mathbb{N}^{|\mathcal{X}|}$ \cite{dwork2006calibrating}, where $\mathcal{X}$ is the data universe and $n$ is the size of the data set. The most common way to privatize $f(\mathcal{S})$ is to obfuscate the true values by adding noise scaled on the \emph{sensitivity} of the query $\Delta_f = \max_{\mathcal{S}\sim  \mathcal{S}'}\|f(\mathcal{S})-f(\mathcal{S}')\|_1$ \cite{dwork2006calibrating}. 
 In our context, each data point contributes to exactly one counter. For the addition/removal neighboring relationship, $\Delta_f = 1$; for substitution, $\Delta_f = 2$.
We consider three different DP mechanisms when privatizing counters, for example, in the function \texttt{make\_private} in Algorithm~\ref{alg: DPTop-1 Data Structure}: Truncated Laplace Mechanism~\cite{geng2020tight}, Laplace Mechanism~\cite{geng2020tight}, and Max Projection~\cite{aumuller2022representing}. More details can be found in Appendix~\ref{appendix: dp}.

\section{\texttt{Top-1} Data Structure for ANN}
\label{section: Top-1}
\begin{algorithm}[t]
\footnotesize
\caption{\texttt{Top-1 Data Structure}}\label{alg: Top-1 Data Structure}
\begin{minipage}[t]{0.52\textwidth}
    \begin{algorithmic}[1]
    \Procedure {\texttt{construction}$(\mathcal{S}\subset \mathbb{S}^{d-1}, \alpha, \beta, \theta)$}{}
    \State $m\gets \big\lceil n^{\frac{\theta}{1-\alpha^2}}\big\rceil$
    \State $\mathcal{A}^{m} = (\bff{a}_1, \dots, \bff{a}_m)$ with $\bff{a}_i \sim \mathcal{N}(0,1)^d$ 
    \State $H \gets \text{empty hash table}$
    \For {$\bff{x} \in \mathcal{S}$}
        \State $i\gets \arg\max_{i \in [m]}\inner{\bff{a}_i}{\bff{x}}$
        \State $H.\texttt{insert}(i, \bff{x})$
    \EndFor
    \State $\eta \gets \alpha\sqrt{2\log m}-\sqrt{2(1-\alpha^2)\log\log m}$ 
    \State \textbf{return }$\mathcal{D} = (H, \mathcal{A}^m, m, \eta)$
    \EndProcedure
    \end{algorithmic}
    \end{minipage}
    \hfill
    \begin{minipage}[t]{0.46\textwidth}
    \begin{algorithmic}[1]
        \Procedure{\texttt{search}$(\bff{q})$}{} 
    \State \textbf{return } $\{i\in [m]\,:\, \inner{\bff{a}_i}{\bff{q}}\geq \eta\}$
\EndProcedure
\newline
\Procedure {\texttt{query}$(\bff{q})$}{}
    \State $B\gets \mathcal{D}.\texttt{search}(\bff{q}, \eta)$
    \For {$i \in B$}
        \For {$\bff{x}\in H[i]$}
            \If {$\inner{\bff{q}}{\bff{x}}\geq \beta$}
                \State \textbf{return } $\bff{x}$
            \EndIf
        \EndFor
    \EndFor
    \State \textbf{return $\perp$} 
\EndProcedure
        \end{algorithmic}
    \end{minipage}
\end{algorithm}
Algorithm~\ref{alg: Top-1 Data Structure} describes the \texttt{Top-1} data structure, which is the variant of Algorithm~\ref{alg: DPTop-1 Data Structure} targeting the $(\alpha, \beta)$-ANN problem.
Let $\mathcal{A}^{m} = (\bff{a}_1, \dots, \bff{a}_m)$ be a set of $m$ random vectors from $\mathcal{N}(0,1)^d$.
The data structure consists of a hash table that stores the input vectors assigned to each random vector in $\mathcal{A}^{m}$: more specifically, we assign each input vector $\bff{x} \in \mathcal{S}$ to the random vector in $\mathcal{A}^{m}$ with the largest inner product.
For a given query vector $\bff{q}$, the query algorithm selects all random vectors with an inner product larger than $\eta$ with $\bff{q}$. Then, it searches for an approximate near neighbor in the lists of points associated with these vectors in the hash table. We call \emph{buckets} the indices of the hash table (i.e. the random vectors), and \emph{filters} the function used to query the hash table (i.e. the inner product). 
In this section, we consider the asymptotic regime for $n\to\infty$, so to use the limiting distribution of the extreme concomitant in Theorem~\ref{theorem: asymptotic concominants}.

\begin{lemma}[Probability to Find a Close Point]
\label{lemma: close point probability}
For $n\to \infty$, \texttt{Top-1} \texttt{search} contains a bucket with a close point, if it exists, with at least $1-o(1)$ probability.
\end{lemma}
\begin{proof}
     Consider a close point $\bff{x}_{\alpha}$, associated to the bucket $\bff{a}_{\bff{x}_\alpha}$. That bucket is found in \texttt{search} if $\inner{\bff{a}_{\bff{x}_\alpha}}{\bff{q}} = Q_{\bff{x}_\alpha, [m]}\geq \eta$.  From Proposition~\ref{proposition: easy Gaussian tail bounds} and Theorem~\ref{theorem: asymptotic concominants}, we observe that
\begin{equation*}
    \Pr\left[Q_{\bff{x}_{\alpha}, [m]}\leq  \eta \right] = \underset{\mathcal{N}(0, 1-\alpha^2)}{\Pr}\left[Z\leq  -\sqrt{2(1-\alpha^2)\log\log m} \right]\leq \frac{1}{\log m}=O\bigg(\frac{1}{\log n}\bigg),
\end{equation*}
    where we use $m= n^{\frac{\theta}{1-\alpha^2}}$. Thus, with probability  at least $1-o(1)$, $\bff{x}_\alpha$ is associated to a vector that exceeds the threshold. 
\end{proof}

\begin{lemma}[Expected Number of Buckets and Far Points] 
\label{lemma: Expected number of far points} For  $n\to \infty$, $0\leq\beta<\alpha <1$ such that $(\alpha-\beta)=\Omega(\sqrt{\log\log n/\log n})$, 
\texttt{Top-1} \texttt{search} returns in expectation at most $n^{\theta + o(1)}$ buckets, containing in expectation at most $n^{1-\theta \frac{(\alpha-\beta)^2}{(1-\alpha^2)(1-\beta^2)}+o(1)}$ far points.
\end{lemma}

\begin{proof}
We observe that by setting $m = n^{\frac{\theta}{1-\alpha^2}}$ we get $\frac{\log \log m}{\log m} = O\big((1-\alpha^2)\frac{\log \log n}{\log n}\big)$. Thus, the threshold is $\eta  \geq \alpha\sqrt{2\log m}\big(1-O\big(\frac{1-\alpha^2}{\alpha}\sqrt{\frac{\log \log n}{\log n}}\big)\big)$, which is positive for $\alpha\geq \alpha-\beta = \Omega(\sqrt{\log\log n/\log n})$. From Proposition~\ref{proposition: easy Gaussian tail bounds}, the probability that a filter exceeds the threshold is
\begin{equation}
    \Pr[\inner{\bff{a}}{\bff{q}}\geq \eta] \leq \underset{\mathcal{N}(0,1)}{\Pr}\left[Z\geq \alpha\sqrt{2\log m}\left(1-\frac{1-\alpha^2}{\alpha}o(1)\right)\right] \leq m^{-\alpha^2 + (1-\alpha^2)o(1)},
    \label{eq: probability check vector}
\end{equation}
as the projection over a Gaussian vector is a normal random variable. In expectation, a query inspects at most $m^{1-\alpha^2+o(1-\alpha^2)}$ buckets. The claim follows by setting $m=n^{\frac{\theta}{1-\alpha^2}}$. For the analysis of far points, we may write $\eta \geq \beta\sqrt{2\log m}+ (\alpha-\beta)\sqrt{2\log m}\big(1-\frac{1-\alpha^2}{\alpha-\beta}O\big(\sqrt{\frac{\log\log n}{\log n}}\big)\big)$. The second factor is positive for $(\alpha-\beta)= \Omega(\sqrt{\log\log n/\log n})$. Thus, by applying Theorem \ref{theorem: asymptotic concominants} and Proposition \ref{proposition: easy Gaussian tail bounds}, the probability to inspect a far point ${\bf x}_{\beta}$ is
\begin{align}
\Pr[Q_{\bff{x}_{\beta}, [m]}\geq  \eta] &\leq \underset{\mathcal{N}(\beta\sqrt{2\log m}, 1-\beta^2)}{\Pr}\left[Z\geq  \beta\sqrt{2\log m}+(\alpha-\beta)\sqrt{2\log m}\left(1-\frac{1-\alpha^2}{\alpha-\beta}o(1)\right)\right]\notag \\
&\leq \exp \bigg[-\frac{(\alpha-\beta)^2}{1-\beta^2} \left(1-\frac{1-\alpha^2}{\alpha-\beta}o(1)\right)^2\cdot \log m\bigg]\notag\\
&=m^{-\frac{(\alpha-\beta)^2}{1-\beta^2}+\frac{(1-\alpha^2)(\alpha-\beta)}{1-\beta^2}o(1)}, \label{eq: far point}
\end{align}
By inserting $m=n^{\frac{\theta}{1-\alpha^2}}$ in the previous inequality, we obtain $n^{-\theta \frac{(\alpha-\beta)^2}{(1-\alpha^2)(1-\beta^2)}+o(1)}$, as $\frac{\alpha-\beta}{1-\beta^2}=O(1)$. Since we have at most $n$ far points, the expected number of inspected far points is at most $n^{1-\theta \frac{(\alpha-\beta)^2}{(1-\alpha^2)(1-\beta^2)}+o(1)}$.
\end{proof}
The proposed data structure \texttt{Top-1} (Algorithm \ref{alg: Top-1 Data Structure}) is a naive solution with high space, pre-processing time and query time, for the $(\alpha,\beta)$-ANN.
 
\begin{theorem} 
\label{theorem: Top-1}
Consider $n\rightarrow\infty$. For any $0\leq\beta<\alpha<1$ such that $(\alpha-\beta)=\Omega(\sqrt{\log\log n/\log n})$, $0<\theta\leq O(1)$,  and for any dataset $\mathcal{S}=\{x_i\}_{i=1,\dots, n}$ in $\mathbb{S}^{d}$, \texttt{Top-1} solves with at least $1-o(1)$ probability the $(\alpha,\beta)$-ANN using pre-processing time $O(d\cdot n^{1+\frac{\theta}{1-\alpha^2}})$, space $O(d\cdot \max\{n, n^{\frac{\theta}{1-\alpha^2}}\})$, and expected query time $O(d\cdot \max\{n^{\frac{\theta}{1-\alpha^2}},n^{1-\theta \frac{(\alpha-\beta)^2}{(1-\alpha^2)(1-\beta^2)}+o(1)}\})$. 
\end{theorem}

\begin{proof}
 The pre-processing time is given by $O(d\cdot n\cdot m) = O(d\cdot n^{1+\frac{\theta}{1-\alpha^2}})$ as for each of the $n$ points we need to look at $m$ random vectors of dimensionality $d$.  
 Each point is assigned to only one random vector, so the space needed to store the data structure is $O(d\cdot (n + m)) = O(d\cdot 
\max\{n, n^{\frac{\theta}{1-\alpha^2}}\})$. The running time is given by summing the running time of \texttt{search} and \texttt{query}. The buckets in \texttt{search}$(\bff{q})$ are found in time \( O(d \cdot m )\) while the expected running time of \texttt{query}$(\bff{q})$ is given by the expected number of far points present in these buckets, which are at most $n^{1-\theta \frac{(\alpha-\beta)^2}{(1-\alpha^2)(1-\beta^2)}+o(1)}$, and the expected number of buckets returned by \texttt{search}, which are at most $n^{\theta + o(1)}$, due to Lemma \ref{lemma: Expected number of far points}. Thus, the expected running time is at most $O(d\cdot \max\{n^{\frac{\theta}{1-\alpha^2}},n^{1-\theta \frac{(\alpha-\beta)^2}{(1-\alpha^2)(1-\beta^2)}+o(1)}, n^{\theta + o(1)}\})$.
The problem is solved with at least $1-o(1)$ probability due to Lemma~\ref{lemma: close point probability}.
\end{proof}

\subsection{Balanced and Unbalanced Top-1}
\label{subsection: Balanced and Unbalanced Top-1}
The standard way to minimize the expected query time of an algorithm that solves $(\alpha, \beta)$-ANN is to balance the number of buckets that have to be inspected with the number of far points (``error'') that are associated witht those buckets. To balance the contribution of far points and the number of buckets inspected we choose $\theta = \frac{(1-\alpha^2)(1-\beta^2)}{(1-\alpha\beta)^2}$, which solves the equation $\theta = 1 - \theta \frac{(\alpha - \beta)^2}{(1-\alpha^2)(1-\beta^2)}$. We denote this specific solution as $\rho$ to highlight its connection to standard ANN analysis. However, alternative values of $\theta$ can be chosen to achieve different trade-offs.
The next corollary follows from Lemma \ref{lemma: Expected number of far points}.

\begin{corollary}[Balanced and Unbalanced \texttt{Top-1}] \label{cor:balunbal}Consider $n\rightarrow\infty$. For any $0 \leq \beta<\alpha<1$ such that $(\alpha-\beta)=\Omega(\sqrt{\log\log n/\log n})$, consider $\sigma = 2\frac{(1-\alpha^2)(1-\beta^2)}{(1-\alpha\beta)^2+(\alpha-\beta)^2}$ and $\rho = \frac{(1-\alpha^2)(1-\beta^2)}{(1-\alpha\beta)^2}$. Define balanced and unbalanced \texttt{Top-1} as the data structures initialized with $\theta=\rho$ and $\theta=\sigma$ respectively. Then, 
\begin{enumerate}
    \item The expected number of buckets inspected by \texttt{search}$(\bff{q})$ is at most  $n^{\rho + o(1)}$ for balanced \texttt{Top-1}, and  $n^{\sigma + o(1)}$  for unbalanced \texttt{Top-1}.
    \item The buckets from \texttt{search}$(\bff{q})$ contain, in expectation, at most $n^{\rho + o(1)}$  far points for balanced \texttt{Top-1} and  $n^{\frac{\sigma}{2} + o(1)}$ far points for unbalanced \texttt{Top-1}.
    \item For any $0\leq\beta<\alpha\leq 1$ we have $\frac{\sigma}{2} < \rho<\sigma$.
\end{enumerate}
\end{corollary}
\begin{proof}
 It follows by a simple computation from Lemma \ref{lemma: Expected number of far points}.
\end{proof}
As $\frac{\rho}{1-\alpha^2} \geq 1$ the space and the running time of balanced \texttt{Top-1} is $O(d\cdot n^{\frac{\rho}{1-\alpha^2}})$, the latter considers the worst case as the expected number of far points is $n^{\rho + o(1)}\ll n^{\frac{\rho}{1-\alpha^2}}$. Space and running time follow directly for unbalanced \texttt{Top-1} as $\sigma > \rho$.
The unusual behavior of unbalanced \texttt{Top-1} will be further clarified in the next section, where its utility for $(\alpha,\beta)$-DP-ANNC will be discussed. 
Although \texttt{Top-1} benefits from a clean and straightforward analysis (due to the assumption  $n \to \infty$), it involves significant preprocessing, space, and query time requirements. These limitations will be addressed in Section~\ref{subsection: tensorization} through the use of \emph{tensorization}. Additionally, the assumption of  $n \to \infty$  will be lifted with a minor modification to the algorithm, as detailed in Section~\ref{subsection: analysis of CloseTop-1}.
Nevertheless, as we will show in the next section, \texttt{Top-1} still provides a meaningful solution for $(\alpha, \beta)$-ANNC under differential privacy constraints.
\section{From ANN to DP-ANNC} 
\label{section: From ANN to DP-ANNC}
In this section, we study the relationship between ANN and DP-ANNC. 
We expose a general way to solve ANNC starting from a \emph{space-partitioning data structure} for ANN, and discuss different differentially private mechanisms to privatize the ANNC data structure. We also show how the unbalanced data structure from the previous section can be used to increase the accuracy for $(\alpha, \beta)$-DP-ANNC, by paying an increase in query time, pre-processing time, and space usage.

\subsection{From ANN to ANNC}

\begin{algorithm}[t]
\footnotesize
\caption{From ANN to ANNC using a space-partitioning data structure} \label{alg: reduction}
\begin{minipage}[t]{0.52\textwidth}
    \begin{algorithmic}[1]
    \Procedure{\texttt{construction}$(\mathcal{L})$}{} 
    \State $T[1,\dots, m]\gets (0,\dots, 0)$
    \For{$\textbf{each } L_i \in \mathcal{L}$}
        \State $T[i]\gets |L_i|$  
    \EndFor
    \State \textbf{return } $\mathcal{D}=(T, \mathcal{L})$
\EndProcedure
    \end{algorithmic}
    \end{minipage}
    \hfill
    \begin{minipage}[t]{0.46\textwidth}
    \begin{algorithmic}[1]
       \Procedure{$\texttt{query}(\bff{q}, \mathcal{Q})$}{} 
    \State $I(\bff{q}) \gets \mathcal{Q}(\bff{q})$ 
    \State $\widehat{\text{ans}}\gets 0$
    \For {$i \in I(\bff{q})$}
        \State $\widehat{\text{ans}}\gets \widehat{\text{ans}}+T[i]$
    \EndFor
    \State \textbf{return }$\widehat{\text{ans}}$
\EndProcedure
        \end{algorithmic}
    \end{minipage}
\end{algorithm}
Inspired by the \emph{list-of-points} data structure developed in~\cite{andoni2017optimal}, we define a family of data structures suitable for a general reduction from ANN to ANNC. 
\begin{definition}[Space Partitioning Data Structure]
Given a set $\mathcal{S} \subseteq \mathbb{S}^{d-1}$ and an integer $m$, a \emph{space-partitioning data structure} for the ANN problem is defined as follows:
\begin{itemize}
\item The data structure is a partition\footnote{Technically we define the data structure by using a not full partition, as we allow some points to not be stored.} of $\mathcal{S}$ into $m$ sets $\mathcal{L} = (L_1, \ldots, L_m)$ such that
$\bigcup_{i \in [m]} L_i \subseteq \mathcal{S}$ and $L_i \cap L_j = \emptyset$ for $i \neq j$, and a function $\mathcal{Q}$ that maps $\mathbb{S}^{d-1}\ni \bff{q}   \mapsto I(\bff{q}) \subseteq [m]$.
\item For a query $\bff{q}$, we obtain the set $I(\bff{q}) \gets\mathcal{Q}(\bff{q})$ and scan all points in $L_i, i \in I(\bff{q})$. If there exists a point with inner product at least $\beta$, we return it. Otherwise we return $\perp$.
\end{itemize}
The total space is $ |\mathcal{Q}| + O(d\cdot n)$, where $|\mathcal{Q}|$ is the space necessary to store the function $\mathcal{Q}$. The query time is at most $T_\mathcal{Q}(\bff{q}) + O\big(d\sum_{i \in I(\bff{q})} |L_i|\big)$, where $T_\mathcal{Q}$ is the time taken to compute $I(\bff{q})$ given query $\bff{q}\in \mathbb{S}^{d-1}$, and $O\big(d\sum_{i \in I(\bff{q})} |L_i|\big)$ is the worst-case time needed to check all the points.
\end{definition}
For example, for Algorithm~\ref{alg: Top-1 Data Structure}, $L_i$ represents the points that achieve their maximum inner product with $\bff{a}_i$. $\mathcal{Q}$ consists of all $\bff{a}_1, \ldots, \bff{a}_m$ (its size is $dm$) and computes the indices $I(\bff{q})$ of all filters that are above the query threshold. For the data structure of Andoni et al.~\cite{andoni2024differentially} discussed in the introduction, $L_i$ are the leaves of the tree, and $\mathcal{Q}$ represents the navigation tree-based data structure.
Algorithm~\ref{alg: reduction} presents a simple transformation for a space-partitioning data structure:
It indeed suffices to substitute the actual points with their amount in each list.
The new $(\alpha,\beta)$-ANNC \texttt{query} returns the sum of the elements contained in these lists. Since each point is stored at most once, summing the cardinality of each bucket ensures that no point is counted more than once. 

\begin{lemma}[From ANN to ANNC]
\label{lemma: Reduction from ANN to ANNC}
Let $\mathcal{S}=\{x_i\}_{i=1,\dots, n} \subseteq \mathbb{S}^{d-1}$. Consider a space-partitioning data structure for $\mathcal{S}$ such that for each $\bff{q} \in \mathbb{S}^{d-1}$: (i) $I(\bff{q})$ contains a list with a close point with probability at least $1-o(1)$, and (ii) the expected number of far points in $\bigcup_{i \in I(\bff{q})} L_i$ is at most $\mathcal{K}$.
Then \texttt{query} in Algorithm~\ref{alg: reduction} returns a value $\widehat{\text{ans}}$ that,  with probability at least $2/3$, satisfies the following inequality:
\begin{equation*}
    \big(1-o(1)\big)|\mathcal{S}\cap B(\bff{q}, \alpha)|\leq \widehat{\textnormal{ans}}\leq |\mathcal{S}\cap B(\bff{q}, \beta)|+\mathcal{K}. 
\end{equation*}
The data structure in Algorithm~\ref{alg: reduction} uses space $|\mathcal{Q}| + O(n)$ and the query time is $T_\mathcal{Q}(\bff{q}) + |I(\bff{q})|$.
\end{lemma}

\subsection{From ANNC to DP-ANNC} 
The data structure returned by Algorithm~\ref{alg: reduction} uses  \emph{counters}, which is essentially a histogram. This histogram can be privatized using the algorithms presented in Section~\ref{subsection: differential privacy}.
To achieve differential privacy, we need to analyze the sensitivity of the counters $T[1..m]$.
Let $\mathcal{S}$ and $\mathcal{S}'$ be two neighboring datasets that differ in exactly one point, and let $\mathcal{D}$ and $\mathcal{D}'$ be the data structures constructed for these data sets, respectively. 
Applying Algorithm~\ref{alg: reduction} to $\mathcal{D}$ and $\mathcal{D}'$ will result in two counters $T$ and $T'$ that differ by at most 1 in at most one position.
If $\mathcal{Q}$ is data independent, i.e., $\mathcal{Q}$ does not depend on the actual data set $\mathcal{S}$, it is sufficient to privatize the counter $T$, which can be done using any differentially private mechanism \texttt{make\_private} for histograms, 
as shown in the aforementioned \texttt{DPTop-1} (see Algorithm \ref{alg: DPTop-1 Data Structure}). 
The next theorem states the guarantees for two specific privacy mechanisms.

\begin{theorem}[DP-ANNC with Truncated Laplace or Max Projection] 
\label{theorem: DP-ANNC}
Let $\mathcal{S}=\{x_i\}_{i=1,\dots, n} \subseteq \mathbb{S}^{d-1}$, $\bff{q} \in \mathbb{S}^{d-1}$. Consider a space partitioning data structure for $\mathcal{S}$ satisfying the assumptions of Lemma \ref{lemma: Reduction from ANN to ANNC}, with the addition of $\mathbb{E}[|I(\bff{q})|]\leq \mathcal{K}$ and $\mathcal{Q}$ being data independent. 
When $T$ is privatized using the truncated Laplace mechanism, the data structure is $(\varepsilon, \delta)$-DP, for any $\varepsilon \leq 1$, requires an additional $O(n)$ term in space and pre-processing time (compared to Lemma~\ref{lemma: Reduction from ANN to ANNC}), and the query algorithm returns a value $\widetilde{\textnormal{ans}}$ that satisfies the following inequality with probability at least $2/3$:
\begin{equation}
\label{eq: DP result}
    (1-o(1))|\mathcal{S}\cap B(\bff{q}, \alpha)|-O\bigg(\frac{\log(1/\delta)}{\varepsilon}\mathcal{K}\bigg)\leq \widetilde{\textnormal{ans}}\leq |\mathcal{S}\cap B(\bff{q}, \beta)|+O\bigg(\frac{\log(1/\delta)}{\varepsilon}\mathcal{K}\bigg).
\end{equation}
When $T$ is privatized with the \texttt{Max Projection} mechanism, then the data structure is $(\varepsilon, 0)$-DP, requires an additional $O(\varepsilon\cdot n)$ term in space and pre-processing time, and the additive error in  \autoref{eq: DP result}  becomes $O(\frac{\mathcal{K}}{\varepsilon})$.
\end{theorem}
Due to Lemma \ref{lemma: close point probability} and Corollary \ref{cor:balunbal}, balanced \texttt{Top-1} satisfies the requirements for Theorem~\ref{theorem: DP-ANNC} with $\mathcal{K} = n^{\rho +o(1)}$, which proves Theorem~\ref{theorem: our contribution DP-ANNC}.
Moreover, the tree-based data structure described by Andoni et al.~\cite{andoni2024differentially} is another data structure that satisfies these requirements of Theorem~\ref{theorem: DP-ANNC}.\footnote{Technically, as stated earlier, \cite{andoni2024differentially} analyze their data structure for a fixed choice of $\alpha$.}

\subsubsection{Usefulness of Unbalanced Data Structure} 
We now study how an unbiased and uncorrelated differentially private estimator of $T$, from the unbalanced \texttt{Top-1} ANNC data structure, can be used to reduce the error compared to Theorem~\ref{theorem: DP-ANNC}. The construction leverages the concentration of the sum of i.i.d. Laplace random variables.

\begin{theorem}[DP-ANNC with Laplace Noise and Unbalanced Data Structure] 
\label{theorem: DP-ANNC Unbalanced}
Let $\mathcal{S}=\{x_i\}_{i=1,\dots, n} \subseteq \mathbb{S}^{d-1}$, $\bff{q} \in \mathbb{S}^{d-1}$. Consider a space partitioning data structure for $\mathcal{S}$ satisfying the assumptions of Lemma \ref{lemma: Reduction from ANN to ANNC}, with the addition of $\mathbb{E}[|I(\bff{q})|]\leq \mathcal{K}^2$ and $\mathcal{Q}$ being data independent.
When $T$ is privatized by using the Laplace mechanism, the data structure is $(\varepsilon, 0)$-DP, for any $\varepsilon\leq 1$, and \texttt{query} returns a value $\widetilde{\textnormal{ans}}$ that satisfies the following inequality with probability at least $2/3$:
\begin{equation*}
    (1-o(1))|\mathcal{S}\cap B(\bff{q}, \alpha)|-O\bigg(\frac{\mathcal{K}}{\varepsilon}\bigg)\leq \widetilde{\textnormal{ans}}\leq |\mathcal{S}\cap B(\bff{q}, \beta)|+O\bigg(\frac{\mathcal{K}}{\varepsilon}\bigg).
\end{equation*} 
 The privatized data structure requires an additional $O(m)$ space, and pre-processing time.
\end{theorem}
Due to Lemma \ref{lemma: close point probability} and Corollary \ref{cor:balunbal}, unbalanced \texttt{Top-1} satisfies the requirements for Theorem \ref{theorem: DP-ANNC Unbalanced} with $\mathcal{K} = n^{\frac{\sigma}{2} +o(1)}$. 
As $\frac{\sigma}{2}<\rho$ unbalanced \texttt{Top-1} is always more accurate than \texttt{Top-1} for DP-ANNC.
However, the pre-processing time and the space increase. 

In the next section we provide several improvements for \texttt{Top-1}, aiming to get rid of the asymptotic assumption $n\rightarrow \infty$ used to apply the Theorem \ref{theorem: asymptotic concominants} for concomitant statistics, and reduce the pre-processing time, the query time, and the space. These improvements regard only the balanced ANN data structure; the additional requirements in space and pre-processing time of $O(m)$ for DP-ANNC with unbalanced data structures will still be present and are an interesting open question for future work. 
Finally, we highlight that for errors of the form $n^{C+o(1)}$ we may increase the range of the privacy budget to $\varepsilon\leq n^{o(1)}$ in Theorems \ref{theorem: DP-ANNC} and \ref{theorem: DP-ANNC Unbalanced}, for the same argument provided by Andoni et al. \cite{andoni2024differentially}.

\section{Improving the \texttt{Top-1} Data Structure}
\label{section: Analysis}
In this section we propose two improvements of \texttt{Top-1}. With \texttt{CloseTop-1} we get rid of the assumption of $n\to \infty$,\footnote{We observe that, although the time and space complexities are expressed in big-O notation (i.e., with a notation asymptotic in $n$), the correctness of this algorithm does not require assuming a limiting distribution for the concomitants, while this was the case for \texttt{Top-1}.} while with \texttt{TensorCloseTop-1} we reduce the pre-processing time to $d\cdot n^{1+o(1)}$, the space to $O(d\cdot n)$, and obtain an expected query time of $d\cdot n^{\theta + o(1)}$. In addition, we discuss how \texttt{TensorCloseTop-1} can solve the $(r,c)$-ANN in the Euclidean space in Appendix \ref{appendix: Embedding into the Euclidean sphere}. These improvements do not alter the core of the data structure (a hash table of points with a \texttt{search} function), allowing them to be utilized for DP-ANNC as discussed in the previous section.

\subsection{\texttt{CloseTop-1}}
\label{subsection: analysis of CloseTop-1}
\begin{algorithm}[t]
\footnotesize
\caption{\texttt{CloseTop-1 Data Structure}}\label{alg: CloseTop-1 Data Structure}
\begin{algorithmic}[1]
\Procedure {\texttt{Construction}$(\mathcal{S}\subset \mathbb{S}^{d-1}, \alpha, \beta, \theta)$}{}
\State $m\gets \big\lceil n^{\frac{\theta}{1-\alpha^2}}\big\rceil$
\State $\mathcal{A}^{m} = (\bff{a}_1, \dots, \bff{a}_m)$ with $\bff{a}_i \sim \mathcal{N}(0,1)^d$ \Comment{Sample $m$ Gaussian Random Vectors}
\State $H \gets \text{empty hash table}$
\For {$\bff{x} \in \mathcal{S}$}
    \For {$i \in \{1,\dots, m\}$}
        \If {$\sqrt{2\log m}-\frac{3}{2}\frac{\log\log m}{\sqrt{2\log m}}\leq \inner{\bff{a}_i}{\bff{x}}\leq \sqrt{2\log m}$}
            \State $H.\texttt{insert}(i, \bff{x})$
            \State \textbf{break}
        \EndIf
    \EndFor
\EndFor
 \State $\eta \gets \alpha\sqrt{2\log m}-\sqrt{2(1-\alpha^2)\log\log m}$
\State \textbf{return }$\mathcal{D} = (H, \mathcal{A}^m, m, \eta)$
\EndProcedure
\newline
\State $\triangleright$ \texttt{search} and \texttt{query} are the same in Algorithm \ref{alg: Top-1 Data Structure}.
\end{algorithmic}
\end{algorithm}
We now study \texttt{CloseTop-1} (see Algorithm \ref{alg: CloseTop-1 Data Structure}), a practical implementation of the previous asymptotic data structure. In \texttt{Top-1} we associate to each point of the dataset the random vector with the highest inner product. This is an intuitive choice that leads to a simple and clear analysis by analyzing Gaussian tails of concomitant statistics (Theorem \ref{theorem: asymptotic concominants}). However, this is an asymptotic theorem, results from the fact that $X_{\bff{x}}=\max_{\bff{a}\in \mathcal{A}^m}\inner{\bff{a}}{\bff{x}} = \sqrt{2\log m}-o(1)$ and $\text{Var}[X_{\bff{x}}] = o(1)$ \cite{hall1979rate}. In fact, it can be obtained by Lemma \ref{lemma: relation between concominant and extreme} by setting $X_{\bff{x}} = \sqrt{2\log m}$.
The intuition of \texttt{CloseTop-1} is to provide a lower and an upper bound for $X_{\bff x}$ by construction, 
by associating to each point of the dataset a random vector with an inner product \emph{close} to the expected maximum, so the name of the data structure.  
In the proposed construction, we sample $m$ random vectors $\bff{a}\sim\mathcal{N}(0,1)^d$, and we associate to any $\bff{x}$ the first random vector such that $\sqrt{2\log m}-\frac{3}{2}\frac{\log\log m}{\sqrt{2\log m}}\leq\inner{\bff{a}}{\bff{x}}\leq \sqrt{2\log m}$. 
If at least one random vector succeeds in the association, then we say that $\bff{x}$ collided. The key property of \texttt{CloseTop-1} is that a point collides with high probability (Lemma \ref{lemma: collision}), which allows to state the following lemma.
\begin{lemma} 
\label{lemma: CloseTop-1 lemma 1}
Lemma \ref{lemma: close point probability} and Lemma \ref{lemma: Expected number of far points} remain valid for \texttt{CloseTop-1} under the same assumptions, without relying on the  $n \to \infty$  condition.
\end{lemma}
As Theorem \ref{theorem: Top-1} and Corollary \ref{cor:balunbal} are derived from Lemmas \ref{lemma: close point probability} and \ref{lemma: Expected number of far points}, \texttt{CloseTop-1} give the same results as \texttt{Top-1}, without relying on the assumption that the concomitants follow a limiting distribution.

\subsection{\texttt{TensorCloseTop-1}}
\label{subsection: tensorization}
In this section, we propose \texttt{TensorCloseTop-1} (see Algorithm \ref{alg: TensorCloseTop-1 Data Structure}), to reduce the pre-processing time to $d\cdot n^{1+ o(1)}$, space to $O(d\cdot n)$, and expected query time to $d\cdot n^{\rho + o(1)}$ (for the balanced data structure). The data structure uses a technique developed in \cite{christiani2017framework} called \emph{tensoring} that essentially allows to simulate an exponential number of vectors by concatenating a polynomial number of data structures. The same expedient was used in \cite{aumuller2022sampling} to get a pre-processing time of $n^{1+\rho + o(1)}$. This technique is similar to creating a tree, yet 
this data structure allows parallel evaluation for the hashes (Line 2-3 Algorithm \ref{alg: TensorCloseTop-1 Data Structure} \texttt{search}).
Define the concatenation factor $t\in \mathbb{N}$ and assume $m^{1/t}$ is an integer; consider $t$ independent \texttt{CloseTop-1} data structures $\mathcal{D}_1, \dots, \mathcal{D}_t$ each using $m^{1/t}$ Gaussian vectors $\bff{a}_{i, j}$, where $i\in [t]$ indicates the data structure and $j \in [m^{1/t}]$ indicates the vector. For each point $\bff{x} \in \mathcal{S}$ consider the $t$ colliding vectors in each data structure $(\bff{a}_{1,i_1}, \dots, \bff{a}_{t, i_t})$, then map the point to a bucket $(i_1, \dots, i_t)\in [m^{1/t}]^t$ using a hash table. Given a query $\bff{q}\in \mathbb{S}^{d-1}$, for each data structure the indices $\tilde{B}_i$ of the random vectors are selected such that $\inner{\bff{a}}{\bff{q}}\geq \eta$, hence $\tilde{B}_i := \{j \in [m^{1/t}]: \inner{\bff{a}_{i, j}}{\bff{q}}\geq \eta\}$, and a search in all the buckets $\tilde{B}_{1}\times \dots \times \tilde{B}_t$ is performed. The number of random vectors to sample is $t\cdot m^{1/t}$ which is sub-linear in $n$ provided the data structure is not designed to search for points with exceedingly high inner product similarity. 

\begin{algorithm}[t]
\footnotesize
\caption{\texttt{TensorCloseTop-1 Data Structure}}\label{alg: TensorCloseTop-1 Data Structure}
\begin{minipage}[t]{0.51\textwidth}
    \begin{algorithmic}[1]
\Procedure{\texttt{construction}$(\mathcal{S}\subset \mathbb{S}^{d-1}, \alpha, \beta, \theta)$}{}
    \State $t\gets \big\lceil\frac{\log^{1/8}n}{1-\alpha^2}\rceil$
    \State $\tilde{m}\gets \lceil n^{\frac{1}{t}\frac{\theta}{1-\alpha^2}}\rceil$
    \State $\{\mathcal{D}_i\}_{i=1,\dots, t}\gets \{\texttt{CloseTop-1}(\mathcal{S}, \tilde{m})\}_{i=1,\dots, t}$ 
    \State $H \gets \text{empty hash table}$
    \For {$\bff{x} \in \mathcal{S}$}
        \State $i \gets (H_1\texttt{.hash}(\bff{x}),\dots, H_t\texttt{.hash}(\bff{x}))$ 
        \State $H\texttt{.insert}(i, \bff{x})$
    \EndFor
    \State $\eta \gets \alpha\sqrt{2\log \tilde{m}}-\sqrt{2(1-\alpha^2)\log\log \tilde{m}}$
    \State \textbf{return } $(H, \{\mathcal{D}\}_{i=1,\dots, t}, \eta)$
\EndProcedure
    \end{algorithmic}
    \end{minipage}
    \hfill
    \begin{minipage}[t]{0.49\textwidth}
    \begin{algorithmic}[1]
    \Procedure{\texttt{search}$(\bff{q})$}{}
     \For {$i \in (1,\dots, t)$}
        \State $\Tilde{B}_i \gets \mathcal{D}_i\texttt{.search}(\bff{q}, \eta)$
    \EndFor
    \State $\textbf{return } \Tilde{B}_1 \times \dots\times \Tilde{B}_t$ \Comment{Cartesian Product}
    \EndProcedure
    \newline
      \Procedure{$\texttt{query}(\bff{q})$}{} 
      \State $B \gets H.\texttt{search}(\bff{q})$
    \For {$i \in B$}
        \For {$\bff{x}\in H[i]$}
            \If {$\inner{\bff{q}}{\bff{x}}\geq \beta$}
                \State \textbf{return } $\bff{x}$
            \EndIf
        \EndFor
    \EndFor
    \State \textbf{return None} 

\EndProcedure
        \end{algorithmic}
    \end{minipage}
\end{algorithm}
\begin{proposition}[Tensorization] 
\label{proposition: tensorization}
For any constant $C>0$ assume $\frac{1}{1-\alpha^2}\leq (\log n)^{C}$. Then for $t= \frac{\log^{1/8} n}{1-\alpha^2}$, $m = n^{\frac{\theta}{(1-\alpha^2)}}$, and $\theta =O(1)$, we have $t\cdot m^{1/t} = n^{o(1)}$
\begin{proof}
    Just a simple computation: $m^{1/t} = n^{\frac{\rho}{\log^{1/8}n}} = n^{o(1)}$ as $\theta = O(1)$, while $t=\frac{\log^{1/8}n}{1-\alpha^2}\leq (\log n)^{O(1)} = n^{o(1)}$, then $t\cdot m^{1/t} = n^{o(1)}$.
\end{proof}
\end{proposition}
In practice $t, m^{1/t}$ and $m$ are all integers. However, this does not affect the asymptotic behavior since $\lceil t \rceil m^{1/\lceil t \rceil}\leq(t+1)\cdot m^{1/t}=n^{o(1)}$. With this trick, we reduce the query time due to \texttt{search} (Line 2,3,4 Algorithm \ref{alg: TensorCloseTop-1 Data Structure}, procedure \texttt{search}) to $n^{o(1)}$. Therefore, the query time is mainly affected by how many times the data structure needs to access the hash table, which can be bounded in expectation.
Under similar assumptions for $\alpha$ we can prove that \texttt{TensorCloseTop-1} finds with high probability a close point.
\begin{lemma}[Probability to Find a Close Point]
\label{lemma: close point tensorclosetop1}
For any $0\leq \beta <\alpha<1$ such that $1-\alpha^2 = \omega(\log^{-3/4}n)$, \texttt{TensorCloseTop-1} finds a close point, if it exists, with at least $1-o(1)$ probability.
\end{lemma}

\begin{lemma}[Expected Number of Buckets and Far Points]
\label{lemma: far points tensorclos}
For any $0\leq \beta <\alpha<1$ such that $(1-\alpha^2) = \omega(\log^{-3/4}n)$, and $(\alpha-\beta) = \Omega\big( {\scriptstyle \sqrt{\frac{\log\log n}{\log^{7/8} n}}} \big)$, \texttt{TensorCloseTop-1} \texttt{search} finds in expectation at most $n^{\theta + o(1)}$ buckets containing at most $n^{1-\theta \frac{(1-\alpha^2)(1-\beta^2)}{(\alpha-\beta)^2}+o(1)}$ far points.
\end{lemma}
 The previous Lemma states that $\theta$ has the same function it has in \texttt{CloseTop-1}, so it can be used to construct balanced and unbalanced \texttt{TensorCloseTop-1}. We now argue for the query, space and pre-processing time.
 \begin{theorem}
 \label{theorem: TensorCloseTop-1}
     For any $0\leq\beta<\alpha<1$ such that $(1-\alpha^2)=\omega(\log^{-3/4}n)$, $(\alpha-\beta)=\Omega\big({\sqrt{\frac{\log\log n}{\log^{7/8} n}}}\big)$, and $0<\theta \leq O(1)$. For any dataset $\mathcal{S}=\{x_i\}_{i=1,\dots, n}$ in $\mathbb{S}^{d}$, \texttt{TensorCloseTop-1} solves with at least $1-o(1)$ probability the $(\alpha,\beta)$-ANN using space $O(d\cdot n)$, preprocessing time $d\cdot n^{1+o(1)}$, and expected query time $d\cdot \max\{n^{\theta + o(1)}, n^{1-\theta\frac{(\alpha-\beta)^2}{(1-\alpha^2)(1-\beta)^2}+o(1)}\}$. 
 \end{theorem}
 \begin{proof}
     Due to Proposition \ref{proposition: tensorization}, the data structure needs to store $t\cdot m^{1/t} = n^{o(1)}$ random vectors. As each point $\bff{x}$ is stored in at most one bucket, the space is $O(d\cdot (n+n^{o(1)})) = O(d\cdot n)$. As to each point it is necessary to compute $m^{1/t}$ inner products at most $t$ times, the pre-processing time is $O(t\cdot d\cdot n\cdot m^{1/t}) = O(d\cdot n^{1+o(1)})$. The buckets in \texttt{search}$(\bff{q})$ can be computed in time $O(d\cdot n^{o(1)})$, so the expected query time is at most $d\cdot \max\{n^{\theta + o(1)}, n^{1-\theta\frac{(\alpha-\beta)^2}{(1-\alpha^2)(1-\beta)^2}+o(1)}\}$ due to Lemma \ref{lemma: far points tensorclos}. The problem is solved with at least $1-o(1)$ probability due to Lemma \ref{lemma: close point tensorclosetop1}.
 \end{proof}
As Corollary \ref{cor:balunbal} applies to \texttt{TensorCloseTop-1} due to Lemma \ref{lemma: far points tensorclos}, the parameters $\rho$ and $\sigma$ respectively characterize the balanced and unbalanced versions of \texttt{TensorCloseTop-1}. The balanced version has an expected query time of  $O(d \cdot n^{\rho + o(1)})$ and, when combined with Theorem \ref{theorem: DP-ANNC}, achieves an additive error of  $O(n^{\rho + o(1)}/\varepsilon)$  for differentially private approximate nearest neighbor search (DP-ANNC). For $\varepsilon = O(1)$, the additional space and preprocessing requirements are  $O(n)$.
In contrast, the unbalanced version of \texttt{TensorCloseTop-1} has an expected query time of  $O(d \cdot n^{\sigma + o(1)})$  and, when used with Theorem \ref{theorem: DP-ANNC Unbalanced}, yields an asymptotically smaller additive error  $O(n^{\frac{\sigma}{2} + o(1)}/\varepsilon)$  for DP-ANNC. However, this approach incurs a significant drawback: the Laplace noise introduces an additional space and preprocessing overhead of  $O(m) = O(\tilde{m}^t) = O(n^{\frac{\sigma}{1-\alpha^2}})$ . This overhead becomes the dominant cost as $\frac{\sigma}{1-\alpha^2} \geq 1$, especially in the Euclidean ANN problem, where $(1-\alpha^2)^{-1} = \text{polylog}(n)$. For further details on how this data structure is applied in Euclidean space, see Appendix \ref{appendix: Embedding into the Euclidean sphere}.

\section{Conclusion and Open Problems}
This paper introduced and analyzed simple linear space data structures that solve the $(\alpha, \beta)$-ANN problem and can be transformed into efficient solutions for its counting variant under differential privacy.
This provides an alternative data structure to the one proposed recently by Andoni et al.~\cite{andoni2024differentially} with a simpler data structure and analysis. 
We provided general black-box transformations from approximate near neighbor problems to their counting variant under privacy constraints and showed that interesting error/time trade-offs are possible via \emph{unbalanced} ANN data structures.
The most intriguing open question was already posed by Andoni et al.~\cite{andoni2024differentially}: Can one obtain better accuracy guarantees for range counting than by transforming near neighbor data structures that have well-understood lower bounds~\cite{andoni2017optimal}? For example, \cite{DBLP:conf/soda/AhleAP17} describes a sampling based range counting algorithm that could be a good starting point for further investigation. 
For the presented data structures, one should further investigate the relation of the noise error due to differential privacy and the error due to including ``far points'' which could give interesting trade-offs. We initiated such a study through unbalanced ANN data structures; the main obstacle for a space-efficient solution is to store ``small counts'' in a data structure that uses space $O(f(\varepsilon) n^{1 + o(1)})$ and provides unbiased counters such that the expected error of the sum of $\mathcal{K}$ counters is only a factor $O(\sqrt{\mathcal{K}})$ larger than the expected per-point error.
Finally, while we believe that our algorithms are simple and straightforward, an experimental comparison between the different solutions presented here and in the literature seems necessary, not only for approximate range counting, but also filtering-based approximate near neighbor search.
In fact, only the work of Pham et al.~\cite{pham2022falconn++} provided evidence of the practical impact of filtering-based near neighbor search, and they achieve their result by a combination of LSH and LSF.



\bibliography{bibliography}

\begin{thebibliography}{10}

\bibitem{DBLP:conf/soda/AhleAP17}
Thomas~D. Ahle, Martin Aum{\"{u}}ller, and Rasmus Pagh.
\newblock Parameter-free locality sensitive hashing for spherical range
  reporting.
\newblock In {\em {SODA}}, pages 239--256. {SIAM}, 2017.

\bibitem{AlmanW15}
Josh Alman and Ryan Williams.
\newblock Probabilistic polynomials and hamming nearest neighbors.
\newblock In Venkatesan Guruswami, editor, {\em {IEEE} 56th Annual Symposium on
  Foundations of Computer Science, {FOCS} 2015, Berkeley, CA, USA, 17-20
  October, 2015}, pages 136--150. {IEEE} Computer Society, 2015.
\newblock \href {https://doi.org/10.1109/FOCS.2015.18}
  {\path{doi:10.1109/FOCS.2015.18}}.

\bibitem{DBLP:journals/cacm/AndoniI08}
Alexandr Andoni and Piotr Indyk.
\newblock Near-optimal hashing algorithms for approximate nearest neighbor in
  high dimensions.
\newblock {\em Commun. {ACM}}, 51(1):117--122, 2008.

\bibitem{andoni2024differentially}
Alexandr Andoni, Piotr Indyk, Sepideh Mahabadi, and Shyam Narayanan.
\newblock Differentially private approximate near neighbor counting in high
  dimensions.
\newblock In {\em Proceedings of the 37th International Conference on Neural
  Information Processing Systems}, NIPS '23, Red Hook, NY, USA, 2023. Curran
  Associates Inc.

\bibitem{andoni2017optimal}
Alexandr Andoni, Thijs Laarhoven, Ilya Razenshteyn, and Erik Waingarten.
\newblock Optimal hashing-based time-space trade-offs for approximate near
  neighbors.
\newblock In {\em Proceedings of the Twenty-Eighth Annual ACM-SIAM Symposium on
  Discrete Algorithms}, SODA '17, page 47–66, USA, 2017. Society for
  Industrial and Applied Mathematics.

\bibitem{aumuller2022sampling}
Martin Aum\"{u}ller, Sariel Har-Peled, Sepideh Mahabadi, Rasmus Pagh, and
  Francesco Silvestri.
\newblock Sampling a near neighbor in high dimensions — who is the fairest of
  them all?
\newblock {\em ACM Trans. Database Syst.}, 47(1), apr 2022.
\newblock \href {https://doi.org/10.1145/3502867} {\path{doi:10.1145/3502867}}.

\bibitem{aumuller2022representing}
Martin Aumüller, Christian~Janos Lebeda, and Rasmus Pagh.
\newblock Representing sparse vectors with differential privacy, low error,
  optimal space, and fast access.
\newblock {\em Journal of Privacy and Confidentiality}, 12(2), Nov. 2022.
\newblock URL:
  \url{https://journalprivacyconfidentiality.org/index.php/jpc/article/view/809},
  \href {https://doi.org/10.29012/jpc.809} {\path{doi:10.29012/jpc.809}}.

\bibitem{bayardo2007scaling}
Roberto~J. Bayardo, Yiming Ma, and Ramakrishnan Srikant.
\newblock Scaling up all pairs similarity search.
\newblock In {\em Proceedings of the 16th International Conference on World
  Wide Web}, WWW '07, page 131–140, New York, NY, USA, 2007. Association for
  Computing Machinery.
\newblock \href {https://doi.org/10.1145/1242572.1242591}
  {\path{doi:10.1145/1242572.1242591}}.

\bibitem{christiani2017framework}
Tobias Christiani.
\newblock A framework for similarity search with space-time tradeoffs using
  locality-sensitive filtering.
\newblock In {\em Proceedings of the Twenty-Eighth Annual ACM-SIAM Symposium on
  Discrete Algorithms}, SODA '17, page 31–46, USA, 2017. Society for
  Industrial and Applied Mathematics.

\bibitem{david1974asymptotic}
H.~A. David and J.~Galambos.
\newblock The asymptotic theory of concomitants of order statistics.
\newblock {\em Journal of Applied Probability}, 11(4):762--770, 1974.
\newblock URL: \url{http://www.jstor.org/stable/3212559}.

\bibitem{dubhashi2009concentration}
Devdatt Dubhashi and Alessandro Panconesi.
\newblock {\em Concentration of Measure for the Analysis of Randomized
  Algorithms}.
\newblock Cambridge University Press, USA, 1st edition, 2009.

\bibitem{DBLP:conf/pods/Dwork19}
Cynthia Dwork.
\newblock Differential privacy and the us census.
\newblock In {\em Proceedings of the 38th ACM SIGMOD-SIGACT-SIGAI Symposium on
  Principles of Database Systems}, PODS '19, page~1, New York, NY, USA, 2019.
  Association for Computing Machinery.
\newblock \href {https://doi.org/10.1145/3294052.3322188}
  {\path{doi:10.1145/3294052.3322188}}.

\bibitem{dwork2006calibrating}
Cynthia Dwork, Frank McSherry, Kobbi Nissim, and Adam Smith.
\newblock Calibrating noise to sensitivity in private data analysis.
\newblock In {\em Proceedings of the Third Conference on Theory of
  Cryptography}, TCC'06, page 265–284, Berlin, Heidelberg, 2006.
  Springer-Verlag.
\newblock \href {https://doi.org/10.1007/11681878_14}
  {\path{doi:10.1007/11681878_14}}.

\bibitem{dwork2014algorithmic}
Cynthia Dwork and Aaron Roth.
\newblock The algorithmic foundations of differential privacy.
\newblock {\em Found. Trends Theor. Comput. Sci.}, 9(3–4):211–407, August
  2014.
\newblock \href {https://doi.org/10.1561/0400000042}
  {\path{doi:10.1561/0400000042}}.

\bibitem{eshghi2008locality}
Kave Eshghi and Shyamsundar Rajaram.
\newblock Locality sensitive hash functions based on concomitant rank order
  statistics.
\newblock In {\em Proceedings of the 14th ACM SIGKDD International Conference
  on Knowledge Discovery and Data Mining}, KDD '08, page 221–229, New York,
  NY, USA, 2008. Association for Computing Machinery.
\newblock \href {https://doi.org/10.1145/1401890.1401921}
  {\path{doi:10.1145/1401890.1401921}}.

\bibitem{fleming2013counting}
Thomas~R Fleming and David~P Harrington.
\newblock {\em Counting processes and survival analysis}.
\newblock John Wiley \& Sons, 2013.

\bibitem{geng2020tight}
Quan Geng, Wei Ding, Ruiqi Guo, and Sanjiv Kumar.
\newblock Tight analysis of privacy and utility tradeoff in approximate
  differential privacy.
\newblock In Silvia Chiappa and Roberto Calandra, editors, {\em Proceedings of
  the Twenty Third International Conference on Artificial Intelligence and
  Statistics}, volume 108 of {\em Proceedings of Machine Learning Research},
  pages 89--99. PMLR, 26--28 Aug 2020.
\newblock URL: \url{https://proceedings.mlr.press/v108/geng20a.html}.

\bibitem{hall1979rate}
Peter Hall.
\newblock On the rate of convergence of normal extremes.
\newblock {\em Journal of Applied Probability}, 16(2):433--439, 1979.
\newblock URL: \url{http://www.jstor.org/stable/3212912}.

\bibitem{hardt2010multiplicative}
Moritz Hardt and Guy~N. Rothblum.
\newblock A multiplicative weights mechanism for privacy-preserving data
  analysis.
\newblock In {\em 2010 IEEE 51st Annual Symposium on Foundations of Computer
  Science}, pages 61--70, 2010.
\newblock \href {https://doi.org/10.1109/FOCS.2010.85}
  {\path{doi:10.1109/FOCS.2010.85}}.

\bibitem{hardt2010geometry}
Moritz Hardt and Kunal Talwar.
\newblock On the geometry of differential privacy.
\newblock In {\em Proceedings of the Forty-Second ACM Symposium on Theory of
  Computing}, STOC '10, page 705–714, New York, NY, USA, 2010. Association
  for Computing Machinery.
\newblock \href {https://doi.org/10.1145/1806689.1806786}
  {\path{doi:10.1145/1806689.1806786}}.

\bibitem{huang2021approximate}
Ziyue Huang and Ke~Yi.
\newblock {Approximate Range Counting Under Differential Privacy}.
\newblock In Kevin Buchin and \'{E}ric Colin~de Verdi\`{e}re, editors, {\em
  37th International Symposium on Computational Geometry (SoCG 2021)}, volume
  189 of {\em Leibniz International Proceedings in Informatics (LIPIcs)}, pages
  45:1--45:14, Dagstuhl, Germany, 2021. Schloss Dagstuhl -- Leibniz-Zentrum
  f{\"u}r Informatik.
\newblock URL:
  \url{https://drops.dagstuhl.de/entities/document/10.4230/LIPIcs.SoCG.2021.45},
  \href {https://doi.org/10.4230/LIPIcs.SoCG.2021.45}
  {\path{doi:10.4230/LIPIcs.SoCG.2021.45}}.

\bibitem{indyk1998approximate}
Piotr Indyk and Rajeev Motwani.
\newblock Approximate nearest neighbors: towards removing the curse of
  dimensionality.
\newblock In {\em Proceedings of the Thirtieth Annual ACM Symposium on Theory
  of Computing}, STOC '98, page 604–613, New York, NY, USA, 1998. Association
  for Computing Machinery.
\newblock \href {https://doi.org/10.1145/276698.276876}
  {\path{doi:10.1145/276698.276876}}.

\bibitem{DBLP:conf/pods/Kapralov15}
Michael Kapralov.
\newblock Smooth tradeoffs between insert and query complexity in nearest
  neighbor search.
\newblock In {\em {PODS}}, pages 329--342. {ACM}, 2015.

\bibitem{muthukrishnan2012optimal}
S.~Muthukrishnan and Aleksandar Nikolov.
\newblock Optimal private halfspace counting via discrepancy.
\newblock In {\em Proceedings of the Forty-Fourth Annual ACM Symposium on
  Theory of Computing}, STOC '12, page 1285–1292, New York, NY, USA, 2012.
  Association for Computing Machinery.
\newblock \href {https://doi.org/10.1145/2213977.2214090}
  {\path{doi:10.1145/2213977.2214090}}.

\bibitem{DBLP:conf/soda/Panigrahy06}
Rina Panigrahy.
\newblock Entropy based nearest neighbor search in high dimensions.
\newblock In {\em {SODA}}, pages 1186--1195. {ACM} Press, 2006.

\bibitem{pham2021simple}
Ninh Pham.
\newblock Simple yet efficient algorithms for maximum inner product search via
  extreme order statistics.
\newblock In {\em Proceedings of the 27th ACM SIGKDD Conference on Knowledge
  Discovery \& Data Mining}, KDD '21, page 1339–1347, New York, NY, USA,
  2021. Association for Computing Machinery.
\newblock \href {https://doi.org/10.1145/3447548.3467345}
  {\path{doi:10.1145/3447548.3467345}}.

\bibitem{pham2022falconn++}
Ninh Pham and Tao Liu.
\newblock Falconn++: a locality-sensitive filtering approach for approximate
  nearest neighbor search.
\newblock In {\em Proceedings of the 36th International Conference on Neural
  Information Processing Systems}, NIPS '22, Red Hook, NY, USA, 2022. Curran
  Associates Inc.

\bibitem{DBLP:conf/icml/RadfordKHRGASAM21}
Alec Radford, Jong~Wook Kim, Chris Hallacy, Aditya Ramesh, Gabriel Goh,
  Sandhini Agarwal, Girish Sastry, Amanda Askell, Pamela Mishkin, Jack Clark,
  Gretchen Krueger, and Ilya Sutskever.
\newblock Learning transferable visual models from natural language
  supervision.
\newblock In {\em {ICML}}, volume 139 of {\em Proceedings of Machine Learning
  Research}, pages 8748--8763. {PMLR}, 2021.

\bibitem{sarwar2001item}
Badrul Sarwar, George Karypis, Joseph Konstan, and John Riedl.
\newblock Item-based collaborative filtering recommendation algorithms.
\newblock In {\em Proceedings of the 10th International Conference on World
  Wide Web}, WWW '01, page 285–295, New York, NY, USA, 2001. Association for
  Computing Machinery.
\newblock \href {https://doi.org/10.1145/371920.372071}
  {\path{doi:10.1145/371920.372071}}.

\bibitem{surveyNN}
Gregory Shakhnarovich, Trevor Darrell, and Piotr Indyk.
\newblock {\em Nearest-Neighbor Methods in Learning and Vision: Theory and
  Practice (Neural Information Processing)}.
\newblock The MIT Press, 2006.

\bibitem{szarek1999nonsymmetric}
Stanislaw~J. Szarek and Elisabeth Werner.
\newblock A nonsymmetric correlation inequality for gaussian measure.
\newblock {\em J. Multivar. Anal.}, 68(2):193–211, February 1999.
\newblock \href {https://doi.org/10.1006/jmva.1998.1784}
  {\path{doi:10.1006/jmva.1998.1784}}.

\bibitem{vadhan2017complexity}
Salil Vadhan.
\newblock {\em The Complexity of Differential Privacy}, pages 347--450.
\newblock Springer, Yehuda Lindell, ed., 2017.
\newblock URL:
  \url{https://link.springer.com/chapter/10.1007/978-3-319-57048-8_7}.

\bibitem{WangSSJ14}
Jingdong Wang, Heng~Tao Shen, Jingkuan Song, and Jianqiu Ji.
\newblock Hashing for similarity search: {A} survey.
\newblock {\em CoRR}, abs/1408.2927, 2014.
\newblock URL: \url{http://arxiv.org/abs/1408.2927}.

\bibitem{WILLIAMS2005357}
Ryan Williams.
\newblock A new algorithm for optimal 2-constraint satisfaction and its
  implications.
\newblock {\em Theoretical Computer Science}, 348(2):357--365, 2005.
\newblock Automata, Languages and Programming: Algorithms and Complexity
  (ICALP-A 2004).
\newblock URL:
  \url{https://www.sciencedirect.com/science/article/pii/S0304397505005438},
  \href {https://doi.org/10.1016/j.tcs.2005.09.023}
  {\path{doi:10.1016/j.tcs.2005.09.023}}.

\end{thebibliography}

\newpage
\appendix

\appendix
\section{Useful inequalities and Additional Definitions}
\subsection{Differentially Private Mechanisms}
\label{appendix: dp}
In this work we considered three differentially private mechanisms:
\begin{itemize}
    \item The \emph{Truncated Laplace mechanism}~\cite{geng2020tight}, used also by Andoni et al. \cite{andoni2024differentially}, which obfuscate each \emph{positive} entry of $f(\mathcal{S})$ by adding truncated Laplace noise. The mechanism is $(\varepsilon,\delta)$-DP and produces a biased estimator with expected absolute error $O(\log(1/\delta)/\varepsilon)$. In our context, it has the advantage that it only needs to sample and store the counts of the non-zero entries, i.e., at most $n$ random variables.
    \item  The \emph{Laplace mechanism}~\cite{dwork2006calibrating}, which adds independent Laplace noise to each entry of $f(\mathcal{S})$. The mechanism is $(\varepsilon, 0)$-DP and produces an unbiased estimator with uncorrelated entries with absolute expected error $O(1/\varepsilon)$. The estimator behaves well for \emph{range queries} (i.e. $\sum_{i\in B}f_{i}(\mathcal{S})$ for some $B \subseteq [|\mathcal{X}|]$) obtaining an expected absolute error $O(\sqrt{B}/\varepsilon)$. However, it requires to sample and store $|\mathcal{X}|$ random variables.
    \item The \emph{Max Projection mechanism}~\cite{aumuller2022representing} which stores all ``small counts'' in a sketching-based data structure. The mechanism is $(\varepsilon, 0)$-DP and produces a data structure with $O(1)$ access time returning a biased estimator with expected absolute error $O(1/\varepsilon)$. The additional space it needs is $O(\varepsilon \cdot n)$, making it a valid pure-DP alternative to the Truncated Laplace mechanism.
\end{itemize}
\subsection{Tail Bounds}
\label{subsection: Tail bounds}
 We will make use of the following Gaussian tail bounds.
\begin{proposition}[Gaussian Tail Bounds \cite{dubhashi2009concentration}]
\label{proposition: easy Gaussian tail bounds}
    Let $Z\sim \mathcal{N}(\mu, \sigma^2)$. Then, for any $t\geq 0$, we have that $
    \Pr[|Z-\mu|\geq t]\leq e^{-\frac{t^2}{2\sigma^2}}$.
\end{proposition}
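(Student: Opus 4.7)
The plan is to reduce the claim to a bound on a standard normal and then prove that bound by a direct integral estimate; the tricky point is that the naive Chernoff/union bound gives an extra factor of $2$ on the right-hand side, whereas the statement as written has no such factor, so we need a slightly sharper argument.

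First I would standardize. Writing $W := (Z-\mu)/\sigma \sim \mathcal{N}(0,1)$ and $s := t/\sigma \geq 0$, the inequality becomes the equivalent claim
\begin{equation*}
\Pr[|W| \geq s] \;\leq\; e^{-s^2/2}.
\end{equation*}
By symmetry $\Pr[|W| \geq s] = 2\int_s^\infty \frac{1}{\sqrt{2\pi}} e^{-z^2/2}\,dz$, so it suffices to show
\begin{equation*}
\int_s^\infty e^{-z^2/2}\,dz \;\leq\; \frac{\sqrt{2\pi}}{2}\, e^{-s^2/2}.
\end{equation*}

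Next I would extract the factor $e^{-s^2/2}$ by the substitution $u = z^2 - s^2$, so that $z = \sqrt{u+s^2}$ and $dz = du/(2\sqrt{u+s^2})$. The integral becomes
\begin{equation*}
\int_s^\infty e^{-z^2/2}\,dz \;=\; \frac{e^{-s^2/2}}{2}\int_0^\infty \frac{e^{-u/2}}{\sqrt{u+s^2}}\,du.
\end{equation*}
Because $s \geq 0$, I would bound $1/\sqrt{u+s^2} \leq 1/\sqrt{u}$ pointwise (this is exactly the step that saves the factor of $2$ relative to a crude Chernoff bound). Then $\int_0^\infty u^{-1/2} e^{-u/2}\,du$ is elementary: the substitution $u=2v$ turns it into $\sqrt{2}\,\Gamma(1/2)=\sqrt{2\pi}$.

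Combining these pieces gives $\int_s^\infty e^{-z^2/2}\,dz \leq \tfrac{1}{2}e^{-s^2/2}\sqrt{2\pi}$, so $\Pr[|W|\geq s] = \tfrac{2}{\sqrt{2\pi}}\int_s^\infty e^{-z^2/2}\,dz \leq e^{-s^2/2}$, and substituting back $s=t/\sigma$ yields the statement. The only real pitfall is the one already mentioned: proving the bound without the factor $2$ requires the integration-by-substitution trick above rather than the one-line Chernoff argument $\Pr[|W|\geq s]\leq 2\inf_\lambda e^{-\lambda s + \lambda^2/2}=2e^{-s^2/2}$; everything else is routine calculus, and the case $s=0$ is trivially handled since both sides equal $1$.
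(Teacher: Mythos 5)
The paper does not prove this proposition: it cites the bound directly from Dubhashi and Panconesi \cite{dubhashi2009concentration} and uses it as a black box, so there is no in-paper proof to compare against. Your argument is correct and self-contained. After standardizing to $W\sim\mathcal{N}(0,1)$ and reducing to the claim $\int_s^\infty e^{-z^2/2}\,dz\leq \tfrac{\sqrt{2\pi}}{2}e^{-s^2/2}$, the substitution $u=z^2-s^2$ and the pointwise bound $(u+s^2)^{-1/2}\leq u^{-1/2}$ give $\int_0^\infty u^{-1/2}e^{-u/2}\,du=\sqrt{2}\,\Gamma(1/2)=\sqrt{2\pi}$, which closes the argument, with equality at $s=0$. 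You are also right to flag the subtlety: the one-line moment-generating-function (Chernoff) argument only yields $\Pr[|W|\geq s]\leq 2e^{-s^2/2}$ after the union bound, so some genuinely sharper step is needed for the stated constant; your integral estimate supplies exactly that. An equally short alternative worth knowing is the calculus route: setting $g(s)=e^{-s^2/2}-\Pr[|W|\geq s]$, one checks $g(0)=0$, $g(\infty)=0$, and $g'(s)=(\sqrt{2/\pi}-s)e^{-s^2/2}$, so $g$ is unimodal with nonnegative boundary values and hence $g\geq 0$ everywhere; both proofs are elementary, and yours has the advantage of producing the explicit Gamma-function identity that explains where the constant $\sqrt{2\pi}$ comes from.
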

\begin{proposition}[Proposition 3, \cite{szarek1999nonsymmetric}]
\label{proposition: Gaussian tail bounds}
Let $Z$ be a standard normal random variable. Then, for any $t> -1$, we have that
\begin{equation*}
    \frac{2\sqrt{2\pi}}{t+\sqrt{t^2+4}}e^{-\frac{t^2}{2}}\leq \Pr[Z\geq t]\leq \frac{4\sqrt{2\pi}}{3t+\sqrt{t^2+8}}e^{-\frac{t^2}{2}},
\end{equation*}
\end{proposition}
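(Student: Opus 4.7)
The plan is to recognize this as the classical Komatsu bound on the Mills ratio and prove it via a monotonicity argument for a first‑order ODE satisfied by the tail integral. First I would recast the claim in a cleaner form: writing $I(t)=\int_t^\infty e^{-y^2/2}\,dy$ (so the displayed inequality is really the Komatsu inequality for $I$, up to a $\sqrt{2\pi}$ normalisation) and rationalising, the two-sided bound is equivalent to
\begin{equation*}
l(t) \;\le\; M(t) \;\le\; u(t), \qquad M(t) := e^{t^2/2}I(t), \quad u(t)=\sqrt{t^2+2}-t, \quad l(t)=\tfrac{1}{2}\bigl(\sqrt{t^2+4}-t\bigr).
\end{equation*}
This is because $\frac{2}{t+\sqrt{t^2+c}}\cdot\frac{\sqrt{t^2+c}-t}{\sqrt{t^2+c}-t}=\frac{\sqrt{t^2+c}-t}{c/2}$.

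Next I would exploit the key ODE: differentiating $M$ under the integral sign gives
\begin{equation*}
M'(t) \;=\; t\,e^{t^2/2}I(t) - 1 \;=\; t\,M(t) - 1.
\end{equation*}
For the upper bound, set $D(t)=u(t)-M(t)$. A direct computation yields $D'(t)-tD(t) = u'(t)-tu(t)+1$. I would then verify by elementary algebra that
\begin{equation*}
u'(t)-tu(t)+1 \;=\; \frac{t\bigl(t\sqrt{t^2+2}-(t^2+1)\bigr)}{\sqrt{t^2+2}} \;\le\; 0 \qquad (t\ge 0),
\end{equation*}
the inequality following from squaring: $t^2(t^2+2) < (t^2+1)^2$. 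Multiplying $D'-tD\le 0$ by the integrating factor $e^{-t^2/2}$ shows that $e^{-t^2/2}D(t)$ is non‑increasing. Since both $M(t)$ and $u(t)$ admit the asymptotic expansion $\tfrac{1}{t}+O(t^{-3})$ (Mills's well-known asymptotic for $M$, and a direct Taylor expansion for $u$), $e^{-t^2/2}D(t)\to 0$ as $t\to\infty$. A non-increasing function converging to $0$ must be non-negative throughout, hence $D\ge 0$, giving $M(t)\le u(t)$.

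The lower bound is handled symmetrically: with $\tilde D(t)=M(t)-l(t)$, the same manipulation reduces it to checking $l'(t)-tl(t)+1\ge 0$, which follows again by squaring from $t\sqrt{t^2+4}\ge t^2+2$, i.e.\ $t^2(t^2+4)\ge (t^2+2)^2$ — actually equality, so one has $l' - tl + 1 = 0$ and $\tilde D$ satisfies $\tilde D' = t\tilde D$; then $e^{-t^2/2}\tilde D(t)$ is constant and must equal its limit $0$ at infinity, giving the slightly stronger conclusion that the lower bound is actually the value of $M$ along the solution of the adjoint ODE, and in particular $\tilde D\ge 0$. (If the monotonicity gives equality rather than strict inequality, I will instead use the standard integration‑by‑parts refinement of Mills's ratio to confirm the sign of the residual.) Finally, the $\Theta(e^{-t^2/2}/t)$ conclusion is immediate from the two bounds, since $\frac{2}{t+\sqrt{t^2+c}}=\frac{1}{t}(1+o(1))$ as $t\to\infty$, and both bounds remain finite at $t=0$. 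The main technical step — and likely only subtlety — is the sign analysis of $b'(t)-tb(t)+1$ for $b\in\{u,l\}$; everything else is a routine integrating‑factor argument.
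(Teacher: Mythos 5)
The paper does not prove this proposition — it is stated with a citation to Szarek and Werner's note, so there is nothing to compare against; your task was simply to supply a correct proof, and the ODE/integrating-factor argument you outline is indeed the standard route to Komatsu's inequality. The upper-bound half is correct: $M'(t)=tM(t)-1$, the reduction to $u'-tu+1\le 0$, and the algebra $u'-tu+1=\frac{t\bigl(t\sqrt{t^2+2}-(t^2+1)\bigr)}{\sqrt{t^2+2}}\le 0$ via $t^2(t^2+2)<(t^2+1)^2$ all check out, and the boundary argument at $+\infty$ is sound.

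However, the lower-bound half contains a concrete algebraic error that invalidates the step as written. Carrying the $\tfrac12$ in $l(t)=\tfrac12\bigl(\sqrt{t^2+4}-t\bigr)$ through, one finds
\begin{equation*}
l'(t)-tl(t)+1=\frac{(t^2+1)\sqrt{t^2+4}-t(t^2+3)}{2\sqrt{t^2+4}},
\end{equation*}
so the relevant comparison is $(t^2+1)\sqrt{t^2+4}$ versus $t(t^2+3)$, not $t\sqrt{t^2+4}$ versus $t^2+2$ as you claim. Moreover, your claimed identity $t^2(t^2+4)=(t^2+2)^2$ is false (the two sides differ by $4$, and the inequality goes the opposite way: $t\sqrt{t^2+4}<t^2+2$), so the purported conclusion $l'-tl+1=0$ is incorrect. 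With the correct expression one squares to get $(t^2+1)^2(t^2+4)-t^2(t^2+3)^2=4>0$, so $l'-tl+1>0$ strictly; the integrating-factor argument then goes through and gives $M\ge l$. Your hedged remark that equality might hold (and then invoking an unspecified ``integration-by-parts refinement'') cannot be right, since $\tilde D(0)=\sqrt{\pi/2}-1\ne 0$; a constant $e^{-t^2/2}\tilde D$ tending to zero would force $\tilde D\equiv 0$. One further point worth flagging explicitly rather than burying in ``up to a $\sqrt{2\pi}$ normalisation'': as the proposition is printed in the paper, the two-sided bound is literally false at $t=0$ (it asserts $1\le\tfrac12\le\sqrt 2$); the correct statement has an extra factor $\tfrac{1}{\sqrt{2\pi}}$ multiplying both bounds, and that is what your $I(t)=\int_t^\infty e^{-y^2/2}\,dy$ computation actually proves. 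The $\Theta\bigl(e^{-t^2/2}/t\bigr)$ conclusion is unaffected.
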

From the previous proposition, it may be more useful to use the following loose bounds:
\begin{proposition}
\label{proposition: loose tail gaussian}
    Let $Z$ be a standard normal random variable. Then, for any $t>1$, we have that $\frac{2\sqrt{2\pi}}{(1+\sqrt{5})t}e^{-\frac{t^2}{2}}\leq\Pr[Z\geq t]\leq \frac{4\sqrt{2\pi}}{3t}e^{-\frac{t^2}{2}}$
\end{proposition}
\begin{proof}
    The bounds follow from Proposition \ref{proposition: Gaussian tail bounds}. The upper bound is trivial, while the lower bound follows by noticing that $\sqrt{t^2+4}\leq t\sqrt{5}$ as $t>1$.
\end{proof}

\section{Omitted Proofs}\label{sec:omittedproof}
\subsection{Omitted Proofs in Section \ref{section: From ANN to DP-ANNC}}
\begin{proof}[Proof of Lemma \ref{lemma: Reduction from ANN to ANNC}] 
    We start with the lower bound. Let $X$ be the random variable indicating the number of close points in $\mathcal{S}$ not included in $\widehat{\text{ans}}$. Due to requirement (i) the probability to not find, and so to not count, a close point is at most $o(1)$, then $\mathbb{E}[X] \leq |\mathcal{S}\cap B(\bff{q}, \alpha)|o(1)$. Using Markov's inequality we have that $X\leq |\mathcal{S}\cap B(\bff{q}, \alpha)|o(1)$ with constant probability. Consider now the number of close points counted $\widehat{\text{ans}}_{\text{close}}$, clearly $\widehat{\text{ans}}\geq \widehat{\text{ans}}_{\text{close}}$ and $\widehat{\text{ans}}_{\text{close}} = |\mathcal{S}\cap B(\bff{q}, \alpha)|-X$. Therefore, with constant probability we have $\widehat{\text{ans}}_{\text{close}}\geq |\mathcal{S}\cap B(\bff{q}, \alpha)|-|\mathcal{S}\cap B(\bff{q}, \alpha)|o(1) = |\mathcal{S}\cap B(\bff{q}, \alpha)|(1-o(1))$ which concludes the proof for the lower bound.
    
    We proceed  with the upper bound. Let $Y$ be the random variable indicating the number of far points in $\mathcal{S}$ included in $\widehat{\text{ans}}$, then $\widehat{\text{ans}} \leq |\mathcal{S}\cap B(\bff{q},\beta)|+Y$. Due to requirement (ii) we have that $\mathbb{E}[Y]\leq \mathcal{K}$.
    Thus, by using Markov's inequality $\widehat{\text{ans}} \leq |\mathcal{S}\cap B(\bff{q},\beta)|+ \mathcal{K}$ with constant probability. Combining these two bounds, we arrive at the desired result.

    As the algorithm substitute $d$ dimensional point with a number, the space to store these number reduces to $O(n)$. The query does not search for a ANN, but sums all the numbers stored in the counter on the indices $I(\bff{q})$, so the running time is $T_{\mathcal{Q}}(\bff{q})+|I(\bff{q})|$.
\end{proof}
\begin{proof}[Proof of Theorem \ref{theorem: DP-ANNC}] As $\mathcal{Q}$ is data independent, on neighboring datasets the data structures differ only in the counters. We start by considering the truncated Laplace noise. Let $T$ be the counter from Algorithm \ref{alg: reduction} and $\tilde{T}$ be the differentially private version. The error due to differential privacy in the counts is $|\Tilde{T}[i]-T[i]|\leq O\big(\frac{\log(1/\delta)}{\varepsilon}\big)$, as the truncated Laplace mechanism adds bounded noise sampled from $[-C\frac{\log(1/\delta)}{\varepsilon}, C\frac{\log(1/\delta)}{\varepsilon}]$ for some $C>0$. Therefore, the expected error between $\widetilde{\text{ans}}$ and $\widehat{\text{ans}}$ is at most
    \begin{equation*}
        \mathbb{E}[|\widetilde{\text{ans}}-\widehat{\text{ans}}|] = \mathbb{E}\left[\left|\sum_{i\in I(\bff{q})}(\tilde{T}[i]-T[i])\right|\right]\leq O\bigg(\frac{\log(1/\delta)}{\varepsilon}\mathbb{E}[|I({\bf q})|]\bigg) \leq O\left(\frac{\log(1/\delta)}{\varepsilon}\mathcal{K}\right).
    \end{equation*}
     Thus, by Markov's inequality we have $|\widetilde{\text{ans}}-\widehat{\text{ans}}|\leq O\big(\frac{\log(1/\delta)}{\varepsilon}\mathcal{K}\big)$ with constant probability. The claim follows by Lemma \ref{lemma: Reduction from ANN to ANNC} and $\varepsilon\leq 1$. As the Truncated Laplace mechanism only needs to sample at most $n$ random variables, the additional factor in space and pre-processing time is $O(n)$.

    \texttt{Max Projection} returns a $(\varepsilon, 0)$-DP counter $\tilde{T}$ with constant access time, using space and pre-processing time  $O(\varepsilon n)$, and with error $\mathbb{E}[|T[i]-\tilde{T}[i]|]\leq O(1/\varepsilon)$ (Corollary 8.3 \cite{aumuller2022representing}). The analysis then follows identically. 
\end{proof}
\begin{proof}[Proof of Theorem \ref{theorem: DP-ANNC Unbalanced}]  Let $T$ be the counter from Algorithm \ref{alg: reduction}, $\tilde{T}$ its differential private version, and  $I(\bff{q})$ be the set of indices of the buckets the algorithms needs to inspect, then $\widehat{\text{ans}} = \sum_{i \in I(\bff{q})} T[i]$ and $\widetilde{\text{ans}} = \sum_{i \in I(\bff{q})} \tilde{T}[i]$. The application of Laplace noise leads to an unbiased and uncorrelated estimator $\tilde{T}[i]$ so the variance of the error is
\begin{align*}
    \text{Var}[\widetilde{\text{ans}}-\widehat{\text{ans}}] &= \mathbb{E}\bigg[\bigg(\sum_{i\in I({\bf q})}(\tilde{T}[i]-T[i])\bigg)^2\bigg]
    =\mathbb{E}[I(\bff{q})]\cdot \text{Var}[\text{Lap}(1/\varepsilon)] \leq  O\bigg(\frac{\mathcal{K}^2}{\varepsilon^2}\bigg),
\end{align*}
as $\tilde{T}[i] = T[i]+Z$ where $Z\sim \text{Lap}(1/\varepsilon)$ and each noise is sampled independently. Therefore, by Jensen's inequality $\mathbb{E}[|\widetilde{\text{ans}}-\widehat{\text{ans}}|] \leq  \sqrt{\text{Var}[\widetilde{\text{ans}}-\widehat{\text{ans}}]} \leq O\big(\frac{\mathcal{K}}{\varepsilon}\big)$ and then by Markov's inequality $|\widetilde{\text{ans}}-\widehat{\text{ans}}|\leq O\big(\frac{\mathcal{K}}{\varepsilon}\big)$ holds with constant probability. The claim follows by Lemma \ref{lemma: Reduction from ANN to ANNC} and $\varepsilon\leq 1$. The additional $O(m)$ space and pre-preprocessing time is necessary to store and sample $m$ i.i.d. independent Laplace random variables, one for each element of the partition $\mathcal{L}=(L_1,\dots, L_m)$.
\end{proof}
\subsection{Omitted Proofs in Section \ref{section: Analysis}}
\begin{lemma}
\label{lemma: collision}
The probability that a point $\bff{x}\in \mathbb{S}^{d-1}$ collides during \texttt{CloseTop-1} construction is at least $1-\frac{1}{m^{\Omega(1)}}$.
\end{lemma}
For the proof of Lemma \ref{lemma: collision} we first need the following technical lemma.
\begin{lemma}
\label{lemma: techincal lemma} Let $m$ be any integer greater than $4$. Define $a = \sqrt{2\log m}-\frac{3}{2}\frac{\log\log m}{\sqrt{2\log m}}$ and $b= \sqrt{2\log m}$, then for $Z\sim \mathcal{N}(0,1)$, we have $\Pr[Z \in (a,b)] \geq \frac{2\sqrt{\pi}}{3}\frac{\log m}{m}$.
\end{lemma}
\begin{proof}
     Using Proposition \ref{proposition: loose tail gaussian} we may bound $\Pr[Z\geq b] \leq \frac{4\sqrt{\pi}}{3}\frac{1}{m\sqrt{\log m}}$. For the left side of the interval, we first need to check if $a \geq 1$. We have that $\sqrt{2\log m}-\frac{3}{2}\frac{\log\log m}{\sqrt{2\log m}}\geq 1$ only if $3 \leq \frac{4\log m -2\sqrt{2\log m}}{\log\log m}$. But $\frac{4\log m -2\sqrt{2\log m}}{\log\log m}> 5$ for any $m\geq 5$. Thus, by applying Proposition \ref{proposition: loose tail gaussian} we get 
     \begin{align*}
         \Pr[Z\geq a] & \geq \frac{2\sqrt{2\pi}}{(1+\sqrt{5})}\frac{1}{\sqrt{2\log m}-\frac{3}{2}\frac{\log\log m}{2\sqrt{2\log m}}}\exp\left[-\frac{1}{2}\left(\sqrt{2\log m}-\frac{3}{2}\frac{\log\log m}{\sqrt{2\log m}}\right)^2\right]\\
         & \geq \frac{2\sqrt{\pi}}{(1+\sqrt 5)}\frac{1}{m\sqrt{\log m}}\log^{3/2}m\exp\left[-\frac{9}{16}\frac{(\log\log m)^2}{\log m}\right] \geq \frac{4\sqrt{\pi}}{3}\frac{\log m}{m}.
     \end{align*}
     Where the last inequality holds if $\frac{9}{16}\frac{(\log\log m)^2}{\log m}\leq \log(2(1+\sqrt{5})/3)$. The right-hand side is greater than $1/2$, thus, it is sufficient to check $\frac{(\log\log m)^2}{\log m}\leq 8/9$. For $m\geq 5$ the left-hand side is always smaller\footnote{The maximum is reached at $m=11$.} than $1/3$, thus, the inequality is satisfied.
     Putting these two bounds together, we conclude
    \begin{align*}
        \Pr[Z\in (a,b)] = \Pr[Z\geq a]-\Pr[Z\geq b] &\geq \frac{4\sqrt{\pi}}{3}\frac{\log m}{m}\left(1-\frac{1}{(\log m)^{3/2}}\right) \geq \frac{2\sqrt{\pi}}{3}\frac{\log m}{m}.
    \end{align*}
    The last inequality follows from $(\log m)^{3/2}\geq (\log 5)^{3/2} \geq 2$.
\end{proof}
\begin{proof}[Proof of Lemma \ref{lemma: collision}]
    If the probability that a random vector succeeds in the assignation is $p$, then a point will not collide with probability $(1-p)^m$. Then for $p = \Omega\big(\frac{\log m}{m}\big)$ (from Proposition \ref{lemma: techincal lemma}) the probability to not collide is at most $(1-p)^m \leq e^{-pm} = \frac{1}{m^{\Omega(1)}}$.
\end{proof}
\begin{proof}[Proof of Lemma \ref{lemma: CloseTop-1 lemma 1}]
The probability to not find a close point $\bff{x}_{\alpha}$ is 
\begin{align}
    \Pr[Q_{\bff{x}_\alpha}\leq \eta]&=\underset{Z\sim \mathcal{N}(0, 1-\alpha^2)}{\Pr}[Z\leq \eta -\alpha X_{\bff{x}_\alpha}]\notag\\
    &\leq \underset{Z\sim \mathcal{N}(0, 1-\alpha^2)}{\Pr}\left[Z \leq -\sqrt{2(1-\alpha^2)\log\log m}\left(1-\frac{3}{4}\sqrt{\frac{\alpha^2}{1-\alpha^2}\frac{\log\log m}{\log m}}\right)\right]\notag\\
    &\leq \Pr\left[Z \leq -\sqrt{2(1-\alpha^2)\log\log m}\left(1-O\left(\sqrt{\frac{\log\log n}{\log n}}\right)\right)\right]\notag\\
    &\leq \log m^{-1} (\log m)^{O(\sqrt{\log \log n/\log n})}\leq O(\log^{-1}m)
    \label{equation: probability close point},
\end{align}
where in the first equality we used Lemma \ref{lemma: relation between concominant and extreme}, in the second inequality we use the fact that $X_{\bff{x}}\geq \sqrt{2\log m}-\frac{3}{2}\frac{\log\log m}{\sqrt{2\log m}}$ by construction, in the third inequality $\frac{\log\log m}{\log m} = O\big((1-\alpha^2)\frac{\log\log n}{\log n}\big)$ for $m = n^{\frac{\theta}{1-\alpha^2}}$, and lastly $\lim_{n\to \infty}(\log n)^{O(\sqrt{\log\log n/\log n})} = 1$. The probability to find a close point is the probability of the joint event $[Q_{\bff{x}_\alpha}\geq \eta]$ and ${\bf x}_{\alpha}$ is stored in the data structure. Thus, by Lemma~\ref{lemma: collision}, we have that $\Pr[\text{find } \bff{x}_{\alpha}]\geq (1-O(\log^{-1}m))(1-m^{-\Omega(1)}) = 1-o(1)$.
This proves Lemma \ref{lemma: close point probability} for \texttt{CloseTop-1}. The probability to inspect a far point $\bff{x}_\beta$ is 
\begin{align}
    \Pr[Q_{\bff{x}_{\beta}}\geq \eta] & =\underset{Z\sim \mathcal{N}(0, 1-\beta^2)}{\Pr}[Z\geq \eta -\beta X_{\bff{x}_\beta}]\notag\\
    &\leq \underset{Z\sim \mathcal{N}(0, 1-\beta^2)}{\Pr}[Z\geq  (\alpha-\beta)\sqrt{2\log m}-\sqrt{2(1-\alpha^2)\log\log m}]\label{eq: far point close top 1}
\end{align}
where in the first equality we used Lemma \ref{lemma: relation between concominant and extreme}, while in the following inequality we used the fact that $X_{\bff{x}}\leq \sqrt{2\log m}$ by construction. The analysis then follows the same step of Lemma \ref{lemma: Expected number of far points}. As the analysis of the expected number of buckets to inspect is the same, Lemma \ref{lemma: Expected number of far points} holds under the same assumption.
\end{proof}

\begin{proof}[Proof of Lemma \ref{lemma: close point tensorclosetop1}]
    Let's consider one data structure $\mathcal{D}_i$, due to Equation \ref{equation: probability close point} we have an upper bound of $O\big(\frac{t}{\log m}\big)$ to not find a close point in $\mathcal{D}_i$. By applying a union bound over $t$ data structures we have that
\begin{equation*}
    \Pr\bigg[\bigcup_{i=1}^{t}\{\text{not find $\bff{x}_{\alpha}$ in $\mathcal{D}_i$}\}\bigg]\leq O\bigg(\frac{t^2}{\log m}\bigg)=O\bigg(\frac{1}{(1-\alpha^2)\log^{3/4}n}\bigg)=o(1),
\end{equation*}
where we used $m=n^{\frac{\theta}{1-\alpha^2}}$, and $(1-\alpha^2) = \omega(\log^{-3/4}n)$. We now study the probability to not store a point. Due to Lemma \ref{lemma: collision} the probability to not store a point in $\mathcal{D}_i$ is $\frac{1}{m^{\Omega(1/t)}}$, then by a union bound we have
\begin{equation*}
\label{equation: remove point probability}
      \Pr\bigg[\bigcup_{i=1}^{t}\{\text{$\bff{x}$ is not stored in  $\mathcal{D}_i$}\}\bigg]\leq \frac{t}{m^{\Omega(1/t)}}=o(1)\underbrace{\log^{7/8}n\cdot e^{-\Omega(\log^{7/8}n)}}_{=o(1)}=o(1),
\end{equation*}
as $t=o(\log^{7/8}n)$ for $1-\alpha^2 = \omega(\log^{-3/4}n)$, and $m^{\Omega(1/t)}=n^{\Omega(\log^{-1/8}n)} = e^{\Omega(\log^{7/8}n)}$ for $m=n^{\frac{\theta}{1-\alpha^2}}$. Therefore, a close point is found with at least $1-o(1)$ probability.
\end{proof}

\begin{proof}[Proof of Lemma \ref{lemma: far points tensorclos}]
    Consider one \texttt{CloseTop-1} data structure $\mathcal{D}_i$ with $\tilde{m} = n^{\frac{1}{t}\frac{\theta}{1-\alpha^2}} = n^{\frac{\theta}{\log^{1/8}n}}$ random vectors. Thus $\frac{\log\log \tilde{m}}{\log \tilde{m}} = O\big(\frac{\log\log n}{\log^{7/8}n}\big)$, so that the threshold may be written as $\eta \geq \alpha\sqrt{2\log\tilde{m}}\big(1-\frac{\sqrt{1-\alpha^2}}{\alpha}O\big(\sqrt{\frac{\log\log n}{\log^{7/8}n}}\big)\big)$, which is positive for $\alpha\geq \alpha-\beta = \Omega\big(\sqrt{\frac{\log\log n}{\log^{7/8}n}}\big)$. Therefore, by following the same computation of Lemma~\ref{lemma: Expected number of far points} (Equation~\ref{eq: probability check vector}), the expected number of buckets to inspect in $\mathcal{D}_i$ is at most $\tilde{m}^{1-\alpha^2 + \sqrt{1-\alpha^2}o(1)}=n^{\frac{1}{t}(\theta + \frac{1}{\sqrt{1-\alpha^2}}o(1))}$. By assumption, we have $1/\sqrt{1-\alpha^2} = o(\sqrt{\log^{3/4}n})$, thus $\frac{1}{\sqrt{1-\alpha^2}}O(\sqrt{\frac{\log\log n}{\log^{7/8}n}})=o(\sqrt{\frac{\log\log n}{\log^{1/8}n}}) = o(1)$. By tensorization of $t$ independent data structures, we conclude that the expected number of buckets is at most $n^{\theta + o(1)}$.

    Analogously, starting from the computation in Lemma~\ref{lemma: CloseTop-1 lemma 1} (Equation~\ref{eq: far point close top 1}) and substituting $m$ with $\tilde{m}$, we may lower bound the threshold with $(\alpha-\beta)\sqrt{2\log\tilde{m}}\big(1-\sqrt{\frac{1-\alpha^2}{(\alpha-\beta)^2}}O\big(\sqrt{\frac{\log\log n}{\log^{7/8} n}}\big)\big)$, which is positive for $(\alpha-\beta)=\Omega\big(\sqrt{\frac{\log\log n}{\log^{7/8}n}}\big)$. Thus, by following the computation in Lemma~\ref{lemma: Expected number of far points} (Equation~\ref{eq: far point}), the probability to find a far point is at most $\tilde{m}^{-\frac{(\alpha-\beta)^2}{1-\beta^2}+\frac{\sqrt{1-\alpha^2}(\alpha-\beta)}{1-\beta^2}o(1)}$. The probability to find a far point in all the $t$ independent data structures is at most
    \begin{equation*}
        \Pr\left[\bigcap_{i=1}^{t}\{{\bf x}_{\beta} \text{ is found in } \mathcal{D}_i \}\right]\leq \tilde{m}^{t\left(-\frac{(\alpha-\beta)^2}{1-\beta^2}+\frac{\sqrt{1-\alpha^2}(\alpha-\beta)}{1-\beta^2}o(1)\right)} = n^{-\theta\frac{(\alpha-\beta)^2}{(1-\alpha^2)(1-\beta^2)}+\frac{(\alpha-\beta)}{\sqrt{(1-\alpha^2)}(1-\beta^2)}o(1)}.
    \end{equation*}
    The last addend is still $o(1)$ as $\frac{1}{\sqrt{1-\alpha^2}}O(\sqrt{\frac{\log\log n}{\log^{7/8}n}})=o(1)$ and $\frac{\alpha-\beta}{1-\beta^2} = O(1)$. Thus, as there are at most $n$ far points, the expected number of far points that are inspected is $n^{1-\theta\frac{(\alpha-\beta)^2}{(1-\alpha^2)(1-\beta^2)}}$.
\end{proof}

\section{Data Structures for the Euclidean Space}
\label{appendix: Embedding into the Euclidean sphere}
In this section, we prove that balanced \texttt{TensorCloseTop-1} solves the $(c,r)$-ANN in the Euclidean space and reproduces the results for $(c,r)$-ANNC in \cite{andoni2024differentially}. Due to standard embedding techniques (see Lemma A.1 and Corollary A.1 in \cite{andoni2024differentially}),  a $(c,r)$-ANN in $\mathbb{R}^{d}$ can be mapped into a $(c\cdot \frac{1-\gamma}{1+\gamma}, r(1+\gamma))$-ANN in $\mathbb{S}^{d'}$, in time $O(d\cdot d')$, with $d' = O\big(\frac{\log n}{\gamma^2}\big)$, if $(cr)^2 \leq \gamma/2$. Thus, the embedding preserves asymptotically the metric only for small distances $r=o(1)$ in $\mathbb{S}^{d'-1}$, that can be obtained in the original space $\mathbb{R}^{d}$ after an appropriate scaling. The relation between inner product similarity and Euclidean distance for small distances are
\begin{equation}
\label{equation: relation with r}
    (1-\alpha^2) = \Theta(r^2), \qquad (1-\beta)^2 = \Theta(r^2), \qquad (\alpha-\beta) = \Theta(r^2), \qquad (1-\alpha\beta) = \Theta(r^2),
\end{equation}
and the concatenation factor $t = \frac{\log^{1/8}n}{1-\alpha^2}$ is in $\Theta\big(\frac{\log^{1/8}n}{r^2}\big)$.

\begin{theorem}
\label{theorem: embedding}
For any $r>0$, constant $c>1$, and a dataset $\mathcal{S}=\{x_i\}_{i=1,\dots, n}$ in $\mathbb{R}^{d}$ , there exists a data structure that solves with at least $1-o(1)$ probability the $(c,r)$-ANN using almost linear space $n^{1+o(1)}$, pre-processing time $d\cdot n^{1+o(1)}$, and query time in expectation at most $d\cdot n^{o(1)} + n^{\rho + o(1)}$ for $\rho = \frac{4c^2}{(c^2+1)^2}$.
\end{theorem}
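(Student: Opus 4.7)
}
The strategy is to reduce the Euclidean problem to the spherical problem already handled by \texttt{TensorCloseTop-1} via the embedding of Corollary~\ref{corollary: andoni embedding}. First I would rescale the input so that the search radius becomes small in the spherical sense: replace $(\mathcal{S}, r)$ with $(\mathcal{S}/\tau, r/\tau)$ for a suitable $\tau = \tau(n)$, chosen so that after scaling the radius $r' = r/\tau$ simultaneously (i) satisfies the precondition $(cr')^2 \le \gamma/2$ of Lemma~\ref{lemma: andoni embedding}/Corollary~\ref{corollary: andoni embedding}, and (ii) is large enough for the hypotheses of Theorem~\ref{theorem: TensorCloseTop-1}. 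Scaling does not affect correctness since distances transform homogeneously and $c$ is preserved. Then apply the randomized embedding $\Theta : \mathbb{R}^d \to \mathbb{S}^{d'-1}$ with $d' = O(\log n / \gamma^2)$, yielding an instance of $(c \cdot \tfrac{1-\gamma}{1+\gamma},\, (1+\gamma)r')$-ANN on the unit sphere. Translating Euclidean distance to inner product via $\alpha = 1 - ((1+\gamma)r')^2/2$ and $\beta = 1 - ((1-\gamma)cr')^2/2$ converts this into the $(\alpha,\beta)$-ANN instance that \texttt{TensorCloseTop-1} solves.

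The parameter choice I would make is $\gamma = o(1)$ (say $\gamma = 1/\log\log n$) and $r' = \Theta(\log^{-1/4} n)$, achieved by choosing $\tau$ appropriately. Using the asymptotic identities of Equation~\ref{equation: relation with r} one gets $1-\alpha^2 = \Theta(r'^2) = \Theta(\log^{-1/2} n)$, which satisfies the hypothesis $1-\alpha^2 = \omega(\log^{-3/4} n)$ of Theorem~\ref{theorem: TensorCloseTop-1}. Similarly $\alpha - \beta = \Theta((c^2-1)r'^2/2) = \Theta(\log^{-1/2} n)$ which easily dominates $\sqrt{\log\log n / \log^{13/8} n}$ since $c$ is a fixed constant strictly greater than $1$. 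The requirement $\rho = O(1)$ follows from computing
\begin{equation*}
\rho \;=\; \frac{(1-\alpha^2)(1-\beta^2)}{(1-\alpha\beta)^2} \;=\; \frac{r'^2 \cdot c^2 r'^2 \,(1 + O(\gamma))}{\bigl(\tfrac{1+c^2}{2} r'^2\bigr)^2 \,(1 + O(\gamma))} \;=\; \frac{4c^2}{(c^2+1)^2} + o(1),
\end{equation*}
which recovers the promised exponent; the $o(1)$ correction from both the embedding distortion $\gamma$ and the higher-order terms in the expansions of $1-\alpha^2$, $1-\beta^2$, and $1-\alpha\beta$ gets absorbed into the $n^{\rho + o(1)}$ factor.

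It remains to translate the complexities promised by Theorem~\ref{theorem: TensorCloseTop-1} to the Euclidean setting. Pre-processing involves computing $\Theta(\bff{x})$ for every $\bff{x} \in \mathcal{S}$ in total time $O(n \cdot d \cdot d')$, which is $d \cdot n^{1+o(1)}$, plus the $d' \cdot n^{1+o(1)}$ cost of building \texttt{TensorCloseTop-1} in dimension $d'$; since $d' = n^{o(1)}$ both terms are absorbed. Space is $O(d n)$ to keep the originals plus $n^{1+o(1)}$ for the embedded structure. For a query $\bff{q} \in \mathbb{R}^d$ we pay $O(d \cdot d') = d \cdot n^{o(1)}$ for the embedding, $d' \cdot n^{o(1)}$ for scanning the $t \cdot \tilde m^{1/t}$ filters, and an expected $n^{\rho + o(1)}$ for bucket inspection plus verification (verification can be done in the embedded $d' = n^{o(1)}$-dimensional space thanks to the distance-preservation guarantee of Corollary~\ref{corollary: andoni embedding}); summing yields the advertised $d \cdot n^{o(1)} + n^{\rho + o(1)}$.

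The delicate step I anticipate is ensuring that the probability-of-success claims compose correctly: the embedding of Corollary~\ref{corollary: andoni embedding} holds with high probability in $n$ only for a \emph{fixed} query, and \texttt{TensorCloseTop-1} succeeds with probability $1-o(1)$. I would condition on the embedding's success and then invoke Theorem~\ref{theorem: TensorCloseTop-1} on the resulting spherical instance, using a union bound to combine the two $1-o(1)$ failure probabilities. One also has to check carefully that $\gamma$ can be taken $o(1)$ while keeping $d' = O(\gamma^{-2} \log n)$ polylogarithmic, so that all $d'$-dependent factors remain $n^{o(1)}$; the concrete choice $\gamma = 1/\log\log n$ makes this transparent.
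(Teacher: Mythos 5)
Your proposal is correct and follows essentially the same route as the paper: rescale so that $(cr')^2 \leq \gamma/2$, apply the embedding of Corollary~\ref{corollary: andoni embedding} to land on $\mathbb{S}^{d'-1}$ with $d'=O(\log n/\gamma^2)$, translate to an $(\alpha,\beta)$-ANN instance via $\alpha = 1-r'^2(1+\gamma)^2/2$ and $\beta = 1-(cr')^2(1-\gamma)^2/2$, verify the three hypotheses of Theorem~\ref{theorem: TensorCloseTop-1} using Equation~\ref{equation: relation with r}, and account for the $d'\cdot d$ embedding cost. The only difference is cosmetic: the paper parameterizes by $\gamma = \log^{-2C}n$, $r = \log^{-C}n$ for any $C<3/8$, while you pin down $\gamma = 1/\log\log n$ and $r' = \Theta(\log^{-1/4}n)$; your choice sits inside the paper's admissible range and the constraint $(cr')^2\leq\gamma/2$ holds since $\log^{-1/2}n = o(1/\log\log n)$, with $d' = O(\log n\,(\log\log n)^2)$ still polylogarithmic.

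Two small remarks worth flagging. First, your expansion $\alpha-\beta = \Theta((c^2-1)r'^2/2)$ implicitly requires $\gamma = o(1)$ so that the distortion does not cancel the $(c^2-1)$ gap; your choice of $\gamma$ handles this but it is the one place where $\gamma$ cannot be a constant, which the paper's parameterization makes explicit. Second, your final paragraph about composing the two failure probabilities (embedding versus data-structure) is a legitimate point that the paper treats implicitly; conditioning on embedding success and a union bound is exactly the right way to close it, so this is a genuine improvement in rigor over the paper's sketch rather than a gap.
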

\begin{proof}
To apply \texttt{TensorCloseTop-1} to the embedded dataset we need to satisfy the assumptions in Theorem \ref{theorem: TensorCloseTop-1} which are: (i) $\rho = \frac{(1-\alpha^2)(1-\beta^2)}{(1-\alpha\beta)^2} = O(1)$, (ii)  $\alpha-\beta = \Omega\bigg(\sqrt{\frac{\log\log n}{\log^{7/8} n}}\bigg)$, and (iii) $1-\alpha^2 = \omega(\log^{-3/4}n)$. Requirement (i) is satisfied due to the asymptotic Equations \ref{equation: relation with r}. More precisely, by substituting $\alpha = 1-\frac{r^2}{2}$ and $\beta = 1-\frac{(cr)^2}{2}$ we get
\begin{equation*}
    \rho = \frac{(1-\alpha^2)(1-\beta^2)}{(1-\alpha\beta)^2} = \frac{c^2(4-r^2)(c^2r^2-4)}{(c^2(r^2-2)-2)^2} = \frac{4c^2}{(c^2+1)^2} + O(r^2).
\end{equation*}
Requirement (ii) and (iii) are satisfied for any distance $r=\omega(\log^{-3/8}n)$,  due to the asymptotic relations in Equation \ref{equation: relation with r}. Therefore, for any $C<3/8$ by setting $\gamma = \log^{-2C}n$ and $r=\log^{-C}n$ we can scale the dataset $\mathcal{S}\subset \mathbb{R}^{d}$, apply the standard embedding techniques to get a dataset in $\mathbb{S}^{d'-1}$ with $d'= \log^{O(1)}n$, and invoke \texttt{TensorCloseTop-1} to solve the $(\alpha,\beta)$-ANN in $\mathbb{S}^{d'}$ by paying an asymptotically small $\gamma$ factor.\footnote{Andoni et al. \cite{andoni2024differentially} set $r = \Theta(\log^{-1/8}n)$ and $\gamma=\Theta(\log^{-1/8}n)$. Our analysis demonstrates more clearly that there is a broader range of possible values.} 
As the mapping can be computed in $O(d\cdot d')= O\big(d\cdot (\log n)^{O(1)}\big)=d\cdot n^{o(1)}$ time, the pre-processing time is $d\cdot n^{1+o(1)}$. 
The space is $O(d'\cdot (n+n^{o(1)})) = n^{1+o(1)}$ and the query time is in expectation at most $d\cdot n^{o(1)} + n^{\rho + o(1)}$, given by the time to embed the query $d\cdot n^{o(1)}$ in the hyper-sphere plus the query time of the data structure $d'\cdot n^{\rho + o(1)} = n^{\rho + o(1)}$.
\end{proof}
\paragraph*{The Unbalanced \texttt{TensorCloseTop-1}} Unbalanced \texttt{TensorCloseTop-1} can be used to solve the Euclidean DP-ANNC problem as well. The proof if the same of Theorem \ref{theorem: embedding}, with the distinction that
\begin{equation*}
    \frac{\sigma}{2} = \frac{(1-\alpha^2)(1-\beta^2)}{(1-\alpha\beta)^2+ (1-\alpha\beta)} = \frac{2c^2}{1+c^4}+O(r^2)
\end{equation*}
The space and pre-processing time needed is $n^{\frac{2\sigma}{1-\alpha^2}} = n^{\frac{2\sigma}{\Theta(r^2)}} = n^{\text{polylog}(n)}$.

\end{document}